\documentclass[journal]{IEEEtran}

\usepackage{amsmath,amssymb,mathtools,amsthm}
\usepackage{booktabs,tabularx,array}
\usepackage{tikz}
\usetikzlibrary{positioning,calc}
\usepackage{cite}
\IfFileExists{xurl.sty}{\usepackage{xurl}}{\usepackage[hyphens]{url}}
\usepackage[hidelinks]{hyperref}
\allowdisplaybreaks

\newtheorem{theorem}{Theorem}
\newtheorem{lemma}[theorem]{Lemma}
\newtheorem{proposition}[theorem]{Proposition}
\newtheorem{corollary}[theorem]{Corollary}

\newcommand{\AEnt}[1][n]{\overline{\Gamma^*_{#1}}}

\newcommand{\MixL}{\widetilde\Gamma^{\mathrm{MixL}}}
\newcommand{\Abl}{\widetilde\Gamma^{\mathrm{Abl}}}
\newcommand{\Hom}{\widetilde\Gamma^{\mathrm{Hom}}}

\newcommand{\Ftwo}{\mathbb F_2}

\newcommand{\conv}{\operatorname{conv}}

\hypersetup{
  pdftitle={Separating Abelian and Homomorphic Entropy Cones},
  pdfauthor={Shahram Khazaei},
  pdfsubject={Entropy regions induced by Abelian and homomorphic group-characterizable random variables},
  pdfkeywords={entropy region, group-characterizable random variables, Abelian random variables, homomorphic random variables, common information, subgroup lattice}
}

\title{Separating Abelian and Homomorphic Entropy Cones}
\author{Shahram~Khazaei%
\thanks{S. Khazaei is with the Department of Mathematical Sciences, Sharif University of Technology, Tehran, Iran (e-mail: shahram.khazaei@sharif.ir).}}

\begin{document}
\maketitle

\begin{abstract}
Chan and Yeung showed that finite groups suffice to determine which homogeneous
linear information inequalities are universally valid.  We compare two
restricted group-characterizable entropy cones: the Abelian cone $\Abl_n$ and
the homomorphic cone $\Hom_n$, the latter generated by coset systems of normal
subgroups.  We prove
\[
  \Abl_{16}\subsetneq\Hom_{16},
\]
and, if $n_{\rm AH}$ is the least number of variables for which these cones
differ, we show $6\le n_{\rm AH}\le16$.  The separating functional is a
class-restricted entropy inequality: it is valid on the Abelian cone but is not
a universal information inequality.  It is obtained by lifting the order dual
of the P\'alfy--Szab\'o six-cross identity while quantifying errors at inexact
subgroup joins.  We then construct sixteen normal subgroups of a class-two
$2$-group of order $2^{43}$ for which every join error vanishes while the
endpoint containment fails by one bit.  Since mixed-linear random variables
are Abelian, the same example also separates the mixed-linear and homomorphic
entropy cones.
\end{abstract}

\begin{IEEEkeywords}
entropy region, group-characterizable random variables, Abelian random
variables, homomorphic random variables, common information, subgroup lattice.
\end{IEEEkeywords}

\section{Introduction}\label{sec:introduction}
For $n$ jointly distributed random variables $X_1,\ldots,X_n$, write
$[n]=\{1,\ldots,n\}$ and consider the entropy vector
\[
  h_X=\bigl(h_X(A):\varnothing\ne A\subseteq[n]\bigr),
  \qquad h_X(A)=H(X_A).
\]
Thus $h_X$ is a vector in $\mathbb R^{2^n-1}$.  The set of such vectors is the
$n$-variable entropy region, and its closure is the almost-entropic cone
$\AEnt$.  A linear information inequality is an inequality of the form
\[
   \sum_{\varnothing\ne A\subseteq[n]} \kappa_A h(A)\ge0.
\]
Understanding which such inequalities are valid on $\AEnt$ is a basic problem
of information theory.

Chan and Yeung gave a group-theoretic model for this problem
\cite{ChanYeung2002}.  Let $G$ be a finite group, let
$G_1,\ldots,G_n\le G$, and let $\Theta$ be uniform on $G$.  If $X_i=\Theta G_i$ is the
left-coset observation, then for $A\subseteq[n]$,
\begin{equation}\label{eq:gc-entropy-intro}
  H(X_A)=\log |G:G_A|,
  \qquad
  G_A=\bigcap_{i\in A}G_i,
\end{equation}
with $G_\varnothing=G$.  After closure and convexification, these
group-characterizable (GC) entropy vectors generate the whole almost-entropic
cone.  Consequently, a homogeneous linear information inequality is valid for
all random variables if and only if the inequality obtained by substituting
$h(A)=\log|G:G_A|$ is valid for every finite group and every choice of
subgroups.

Algebraic restrictions on the group model give subclasses of GC random
variables, and therefore nested closed entropy cones; some of the inclusions
may be equal.  Four classes are relevant here.
\begin{itemize}
\item A \emph{linear} tuple comes from subspaces of a finite vector space.
Independent products of finitely many linear tuples, whose factors may be over
different finite fields, are called \emph{mixed-linear}.  Their closed entropy
cone is denoted by $\MixL_n$ \cite{JafariKhazaei2021}.
\item An \emph{Abelian} tuple is a GC tuple whose ambient group is Abelian.  Its
closed entropy cone is denoted by $\Abl_n$.
\item A \emph{homomorphic} tuple is a GC tuple for which every coset map
$G\to G/G_i$ is a group homomorphism.  Equivalently, every $G_i$ is normal in
$G$ \cite{KaboliKhazaeiParviz2021}.  Its closed entropy cone is denoted by
$\Hom_n$.
\item Without an algebraic restriction one obtains the GC class, whose closure
and convexification give $\AEnt$ by the Chan--Yeung theorem.
\end{itemize}
In the first three cases ``entropy cone'' means the closed conic hull of the
entropy vectors of the indicated class.  These definitions give
\begin{equation}\label{eq:hierarchy-intro}
  \MixL_n\subseteq\Abl_n\subseteq\Hom_n\subseteq\AEnt.
\end{equation}
We do not know whether the first inclusion can be strict.  This paper proves
that the second can.

\begin{theorem}[Main theorem]\label{thm:main}
There exist a finite group $G$ of order $2^{43}$ and sixteen normal subgroups
$N_1,\ldots,N_{16}\triangleleft G$ such that the associated GC entropy vector
lies in $\Hom_{16}\setminus\Abl_{16}$.  Consequently,
\[
  \MixL_{16}\subseteq\Abl_{16}\subsetneq\Hom_{16}.
\]
\end{theorem}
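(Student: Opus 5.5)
The plan has two halves. The first is an inequality valid on $\Abl_{16}$. The second is a normal-subgroup configuration violating it. Since $\Abl_{16}$ is a closed convex cone, one linear functional $f$ with $f\ge0$ on every Abelian entropy vector and $f(h)<0$ for the given homomorphic vector $h$ suffices. Any GC vector built from normal subgroups lies in $\Hom_{16}$ by definition, so only these two facts need to be established.

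\textbf{Step 1: the Abelian inequality.} For Abelian $G$, the subgroup lattice is modular. It also satisfies the Pálfy--Szabó six-cross identity, or more precisely its order dual, which is a lattice law holding in subgroup lattices of Abelian groups but not of all groups. I will write this dual law as a containment $u\le v$ between two lattice words in six (dual) variables. These words are built from meets and joins of the generators.

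Intersections translate exactly into entropy: $\log|G:G_A|=h(A)$. Joins do not, because a join $K\vee L$ is only represented by a common-information-type quantity. For Abelian, and indeed normal, subgroups we have $K\vee L=KL$, so
\[
\log|G:K\vee L| = h(K)+h(L)-h(K\cap L),
\]
computed from the entropies of the intersections.

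I introduce auxiliary variables, up to sixteen in all, that stand for the intermediate joins appearing in the two words. For each such variable $Z$ meant to equal $X\vee Y$, I define an error term, a nonnegative linear combination of entropies. Typical ones are $h(Z\mid X)$, $h(Z\mid Y)$ and $I(X;Y\mid Z)$, expressed in intersections of the subgroups; together they bound how far $G_Z$ is from $G_X G_Y$.

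The inequality then reads
\[
h(v\text{-side})-h(u\text{-side})\;\le\;\sum(\text{join errors}),
\]
stated as a linear form $f\ge0$. To prove it on Abelian GC vectors, I follow the Pálfy--Szabó derivation step by step. Each use of modularity or of the six-cross law incurs a loss bounded by the errors of the joins involved. These losses come from submodularity of $\log|G:\cdot|$ together with the exact product formula $|KL|=|K||L|/|K\cap L|$, which holds because $G$ is Abelian. Because $f$ is linear and nonnegative on the generators, it stays nonnegative on the closed conic hull $\Abl_{16}$.

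\textbf{Step 2: the counterexample.} I take a class-two $2$-group $G$ of order $2^{43}$, for instance a quotient of a free class-two exponent-$4$ or exponent-$2$-central group built from $\Ftwo$-bilinear data. In such a group, subgroups containing the relevant part of $G'$ are normal. I then choose sixteen such normal subgroups $N_i$.

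All joins are honest, $N_iN_j=N_i\vee N_j$ with equality at the designated auxiliary variables, so every join error vanishes. Meanwhile the six-cross containment fails, because commutators contribute an extra element of $G'$ to one side. The upshot is $f(h)=-\log 2<0$, a one-bit failure. This step is verified by explicit linear algebra over $\Ftwo$: I compute the order of each relevant intersection via ranks, using the commutator pairing to check normality and the endpoint inclusion.

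\textbf{Main obstacle.} The crux is Step 2. I must design the bilinear commutator map so that three conditions hold at once: all sixteen prescribed joins are exact, the dual six-cross endpoint containment fails by exactly one bit, and the $N_i$ are normal. I would first find a minimal non-Arguesian-type configuration in a class-two group where the law fails. I would then enlarge the group until the join-error constraints can all be satisfied simultaneously, and confirm the final configuration by machine rank computations. Finally, $\MixL_{16}\subseteq\Abl_{16}$ follows because mixed-linear tuples are Abelian GC tuples.
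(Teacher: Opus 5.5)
Your two-step architecture is the paper's, but Step~2 is a search plan rather than a proof, and for an existence statement the witness is the whole content. Beyond naming the right family of class-two $2$-groups, you give no group law, no sixteen subgroups, and no element of $N_d\cap N_{y_4}$ lying outside the endpoint subgroup. The heuristic that ``commutators add an extra element to one side'' does not show that the containment fails.

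Your claim that every join error vanishes because normal joins are honest products also holds only when each formal join has its own fresh auxiliary. The order-dual law has eight base variables ($x_1,\dots,x_4,y_1,\dots,y_4$); ``six'' counts the crosses. Its two words contain ten joins, so fresh auxiliaries give eighteen coordinates, not sixteen. To land in dimension sixteen, the paper reuses $a_{23}$ as the auxiliary for $a_{23}+(x_4\cap y_4)$ and $x_1$ for $x_1+(y_1\cap a_{234})$. Those two errors become $H(a_{23}\mid x_4,y_4)$ and $H(x_1\mid y_1,a_{23})$, and normality does not make them vanish. The witness must therefore satisfy:
\begin{itemize}
\item $B_{x_4}\cap B_{y_4}\le N_{a_{23}}$, which is a genuine finite fact that has to be checked;
\item $B_{y_1}\cap N_{a_{23}}\le N_{x_1}$, which is forced by enlarging $N_{x_1}$.
\end{itemize}
The endpoint failure must then survive this enlargement, which feeds back into $q_{12},q_{13}$ and hence into $c$ and $d$. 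Designing this interlocking configuration and certifying it is exactly what your proposal leaves open. In the paper this is the relation graph with shared edge differences, plus the dual functional $\lambda$ that proves $g_2g_4g_6\notin N_{x_1}$.

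Step~1 is also underspecified where it matters. Charging a loss to ``each use of modularity or of the six-cross law'' in a derivation does not work. The law does not follow from modularity, which is why it can fail for normal subgroups. Invoking it inside its own derivation is circular. The paper instead treats the containment $r\le\ell$ as a black box for exact evaluations and controls how far the actual kernels drift from those exact ones, using $\delta(U,W)=\log|U:U\cap W|+\log|W:W\cap U|$. You gesture at the local bound $\delta(K_V,K_X+K_Y)\le\Delta(V;X,Y)$. You also need the triangle inequality and the non-expansiveness properties $\delta(U+T,W+T)\le\delta(U,W)$ and $\delta(U\cap T,W\cap T)\le\delta(U,W)$. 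These let errors add up the expression trees to give $H(X_\ell\mid X_r)\le\sum_{v\in J(r)}\Delta_v+\sum_{v\in J(\ell)}\Delta_v$. This propagation lemma is the missing ingredient in your first half.
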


If $n_{\rm AH}$ is the smallest number of variables for which the Abelian and
homomorphic cones differ, then
\[
  6\le n_{\rm AH}\le16.
\]
The upper bound is Theorem~\ref{thm:main}.  For the lower bound we use the
complete five-variable description of Dougherty, Freiling, and Zeger
\cite{DFZLinearRank}.  Their facet inequalities for the five-variable
linear-rank cone admit common-information proofs, while its extreme rays are
realized by linear random variables independently of the field characteristic.
The same common-information steps are available for normal-subgroup coset
variables, since intersections and products of normal subgroups remain normal
and the product represents the corresponding common information.  Hence the
mixed-linear, Abelian, and homomorphic cones coincide through five variables.

The proof sits at the intersection of linear-rank inequalities, common
information, and subgroup lattices.  Linear random variables correspond to
subspaces, so their entropy inequalities are inequalities for subspace ranks.
Ingleton's inequality is the classical example \cite{Ingleton1971}; later
families were obtained by Dougherty, Freiling, and Zeger and by Kinser
\cite{DFZLinearRank,Kinser2011}.  Common-information extensions provide proofs
of many such inequalities and are useful in secret-sharing lower bounds
\cite{MartinPadroYang2016,FarrasEtAl2020,BamiloshinEtAl2021}.  Related
extension properties and Abelian connections are studied in
\cite{BamiloshinFarrasPadro2025,MejiaMontoya2022}.  Kaboli, Khazaei, and Parviz proved that homomorphic GC tuples have a common
information for every pair of subcollections
\cite[Cor.~6.3]{KaboliKhazaeiParviz2021}.  In the normal-subgroup model the same
observation iterates: intersections remain normal, and if $H,K\triangleleft G$
then $HK\triangleleft G$ and the coset variable with kernel $HK$ is a common
information of the coset variables with kernels $H$ and $K$.  Hence every
finite sequence of common-information extensions used in a linear-rank proof
can be realized within the homomorphic class.

Field characteristic gives a different source of rank inequalities.  For at
most five variables the complete linear-rank cone is characteristic-independent.
The six-variable cone is not completely known, and it remains open whether
field characteristic is relevant there \cite{Dougherty2014Six}.  From seven
variables onward characteristic-dependent inequalities are known; the first
examples use the Fano/non-Fano characteristic split, and Dougherty, Freiling,
and Zeger later gave eight-variable inequalities with network-coding
applications \cite{BlasiakKleinbergLubetzky2011,DFZ2015Characteristic}.
Jafari and Khazaei used the same Fano/non-Fano incompatibility to separate
linear and mixed-linear secret-sharing information ratios
\cite{JafariKhazaei2021}.

More recently, B\'erczi, Geh\'er, Imolay, Lov\'asz, Padr\'o, and Schwarcz
used tensor-product extension properties to obtain a
characteristic-independent rank inequality beyond the common-information
property \cite{BercziEtAl2026}.  Their result gives a new obstruction for folded
skew-representability and illustrates that extension properties stronger than
common information can produce genuinely new rank inequalities.  The present
paper uses a different obstruction, coming from Abelian subgroup lattices, to
separate Abelian from homomorphic GC entropy vectors.

Our starting point is the six-cross identity of P\'alfy and Szab\'o for
subgroup lattices of Abelian groups
\cite{PalfySzabo1995Congruence,PalfySzabo1995,Penttila2017}.  This choice is
structural: the lattice of normal subgroups of an arbitrary group is modular,
so a consequence of modularity alone cannot distinguish the Abelian and
homomorphic classes.  The six-cross law is a genuinely stronger Abelian-lattice
phenomenon and can fail for normal subgroup lattices.  What is still needed for
entropy is robustness: an auxiliary random variable standing for a subgroup
join need not realize that join exactly.  Section~\ref{sec:lift} develops a
quantitative lifting argument that controls this error.

The homomorphic counterexample is explicit.  We construct a class-two
$2$-group and sixteen normal subgroups.  Normality makes the ordinary joins in
the lifted inequality exact, while two additional containments eliminate the
remaining join errors.  The right-hand side is then zero, whereas the endpoint
conditional entropy is one bit.

The theorem concerns entropy regions.  The corresponding operational question
in secret sharing remains open: it is not known whether, for a fixed access
structure, a homomorphic scheme can outperform every mixed-linear scheme.  We
return briefly to this point in Section~\ref{sec:discussion}.

\section{Entropy Regions Induced by Groups}\label{sec:regions}
Let $G$ be a finite group, let $G_1,\ldots,G_n\le G$, let $\Theta$ be uniform on
$G$, and put $X_i=\Theta G_i$.  The tuple $X=(X_1,\ldots,X_n)$ is called
\emph{group-characterizable}, or simply \emph{GC}.  For $A\subseteq[n]$, put
\[
  G_A=\bigcap_{i\in A}G_i,\qquad G_\varnothing=G.
\]
The joint random variable $X_A$ is equivalent to the coset observation $\Theta G_A$,
so
\begin{equation}\label{eq:gc-entropy}
  h_X(A)=H(X_A)=\log|G:G_A|.
\end{equation}

The tuple is \emph{Abelian} if $G$ is Abelian.  It is \emph{homomorphic} if
each $G_i$ is normal: then $G/G_i$ is a quotient group and the observation map
$g\mapsto gG_i$ is a homomorphism; conversely, the kernel of a group
homomorphism is normal \cite{KaboliKhazaeiParviz2021}.  It is \emph{linear} if
$G$ is the additive group of a finite vector space and each $G_i$ is a
subspace.

For a set $\mathcal S$ of entropy vectors, write
$\operatorname{ccone}(\mathcal S)=\overline{\operatorname{cone}(\mathcal S)}$
for its closed conic hull.  For a prime $p$, let $\mathcal L_n^p$ be the set of
entropy vectors of linear tuples over finite fields of characteristic $p$, and
define
\[
  \widetilde\Gamma_n^p=\operatorname{ccone}(\mathcal L_n^p).
\]
A mixed-linear tuple is an independent product of finitely many linear tuples,
possibly over different characteristics.  Since entropy vectors add under
independent products,
\begin{equation}\label{eq:mixl-precise}
 \MixL_n
 =\operatorname{ccone}\!\left(\bigcup_p\mathcal L_n^p\right)
 =\overline{\conv\!\left(\bigcup_p\widetilde\Gamma_n^p\right)}.
\end{equation}
Thus the definition explicitly allows direct products of linear systems over
different finite fields.

Let $\Abl_n$ and $\Hom_n$ be the closed conic hulls of the entropy vectors of,
respectively, Abelian and homomorphic tuples.  Since linear tuples are Abelian
and direct products of Abelian groups are Abelian,
\begin{equation}\label{eq:basic-inclusions}
  \MixL_n\subseteq\Abl_n\subseteq\Hom_n.
\end{equation}

\begin{lemma}\label{lem:closure}
Let $\Lambda(h)$ be a homogeneous linear functional of the joint entropies.  If
$\Lambda(h)\ge0$ for every Abelian tuple, then it is nonnegative on $\Abl_n$,
and hence also on $\MixL_n$.
\end{lemma}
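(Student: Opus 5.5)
The plan is to pass from pointwise nonnegativity on Abelian entropy vectors to the closed conic hull using two elementary properties of a homogeneous linear functional: it is linear and it is continuous on $\mathbb R^{2^n-1}$. Write $\mathcal A_n$ for the set of entropy vectors of Abelian tuples, so that $\Abl_n=\operatorname{ccone}(\mathcal A_n)$. The hypothesis says $\Lambda(h)\ge0$ for all $h\in\mathcal A_n$.

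First I will handle conic combinations. An element of $\operatorname{cone}(\mathcal A_n)$ has the form $h=\sum_{j=1}^m \lambda_j h^{(j)}$ with $\lambda_j\ge0$ and $h^{(j)}\in\mathcal A_n$. Since $\Lambda$ is homogeneous linear, $\Lambda(h)=\sum_j\lambda_j\Lambda(h^{(j)})$. Each term is nonnegative, so $\Lambda\ge0$ on the whole cone.

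Next I will pass to the closure. The set $\{h:\Lambda(h)\ge0\}$ is the preimage of $[0,\infty)$ under a continuous map, so it is a closed half-space. It contains $\operatorname{cone}(\mathcal A_n)$, hence it also contains the closure $\Abl_n$. Equivalently, if $h^{(k)}\to h$ with $h^{(k)}$ in the cone, then $\Lambda(h)=\lim_k\Lambda(h^{(k)})\ge0$.

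The statement about $\MixL_n$ then follows from the inclusion $\MixL_n\subseteq\Abl_n$ in \eqref{eq:basic-inclusions}. That inclusion holds because linear tuples over any field are Abelian GC tuples (the additive group of a vector space is Abelian). It also holds because an independent product of linear tuples is realized by the direct product of the ambient groups, which is again Abelian, together with the product subgroups, so mixed-linear entropy vectors already lie in $\mathcal A_n$.

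I do not expect a genuine obstacle. The only point needing care is this last observation: that the coset construction on $G=\prod_t G^{(t)}$ with subgroups $\prod_t G^{(t)}_i$ yields the sum of the factor entropy vectors. This holds because the indices multiply, $|G:G_A|=\prod_t|G^{(t)}:G^{(t)}_A|$, and taking logarithms turns the product into a sum.
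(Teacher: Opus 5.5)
Your proof is correct and follows the paper's argument. The paper observes in one line that $\{h:\Lambda(h)\ge0\}$ is a closed convex cone containing every Abelian entropy vector, hence containing their closed conic hull; your two steps (conic combinations, then closure) spell this out. Like you, it then obtains the $\MixL_n$ statement from the inclusions \eqref{eq:basic-inclusions}.
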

\begin{proof}
The half-space $\{h:\Lambda(h)\ge0\}$ is a closed convex cone containing every
Abelian entropy vector, hence their closed conic hull.  The last assertion
follows from \eqref{eq:basic-inclusions}.
\end{proof}

\section{From the Six-Cross Identity to an Entropy Inequality}\label{sec:lift}
The P\'alfy--Szab\'o identity
\cite{PalfySzabo1995Congruence,PalfySzabo1995,Penttila2017} is an exact
statement in the subgroup lattice of an Abelian group.  We first recall that lattice statement, and then develop the
additional argument needed when its joins are represented by arbitrary
auxiliary coset variables.

\subsection{The Abelian subgroup lattice and the six-cross law}\label{sec:six-cross-lattice}
Let $A$ be a finite Abelian group.  Its subgroups are partially ordered by
inclusion, with
\[
   U\wedge V=U\cap V,
   \qquad
   U\vee V=U+V.
\]
We write $\cap$ and $+$ for meet and join.  For notational simplicity,
throughout this subsection we identify a subgroup $A_z$ with its index $z$.
Choose eight subgroups
\[
 x_1,\ldots,x_4,y_1,\ldots,y_4\le A.
\]
From them define the following exact lattice expressions:
\begin{equation}\label{eq:q-def}
 q_{ij}=(x_i\cap y_j)+(y_i\cap x_j),
 \quad ij\in\{12,34,13,24,23\},
\end{equation}
and
\begin{align}
 c&=(q_{12}\cap q_{34})+(q_{13}\cap q_{24}),\label{eq:c-def}\\
 d&=(c\cap q_{23})+(x_4\cap y_1),\label{eq:d-def}\\
 a_{23}&=(x_2\cap y_2)+(x_3\cap y_3),\label{eq:a23-def}\\
 a_{234}&=a_{23}+(x_4\cap y_4).\label{eq:a234-def}
\end{align}
The six-cross identity of P\'alfy and Szab\'o has an order-dual form because
finite Pontryagin duality reverses the subgroup-lattice order.  The form needed
here is
\begin{equation}\label{eq:six-cross}
  d\cap y_4\le x_1+(y_1\cap a_{234}).
\end{equation}

\begin{theorem}[P\'alfy--Szab\'o, order-dual form]\label{thm:six-cross}
Containment \eqref{eq:six-cross} holds for every choice of the eight base
subgroups in a finite Abelian group
\cite{PalfySzabo1995Congruence,PalfySzabo1995,Penttila2017}.
\end{theorem}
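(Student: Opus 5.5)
Since Theorem~\ref{thm:six-cross} is quoted from the literature, the plan is to derive the stated order-dual containment from the original P\'alfy--Szab\'o six-cross law by Pontryagin duality. We do not reprove the six-cross law itself. For a finite Abelian group $A$ let $\widehat A=\operatorname{Hom}(A,\mathbb Q/\mathbb Z)$. The annihilator map $U\mapsto U^\perp=\{\chi\in\widehat A:\chi(U)=0\}$ is an inclusion-reversing bijection between the subgroup lattices of $A$ and $\widehat A$. It satisfies $(U+V)^\perp=U^\perp\cap V^\perp$, $(U\cap V)^\perp=U^\perp+V^\perp$ and $U^{\perp\perp}=U$. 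The first identity is immediate. The second follows by applying the first in $\widehat A$ together with the canonical identification $\widehat{\widehat A}\cong A$ and a count of orders, using $|U^\perp|=|A|/|U|$. Hence a lattice containment $s\le t$ between terms built from $\cap,+$ holds for all choices of subgroups of all finite Abelian groups if and only if the dual containment $t^\ast\le s^\ast$ holds for all choices. Here ${}^\ast$ swaps $\cap$ and $+$ and the variables are replaced by their annihilators. The replacement is harmless because every tuple of subgroups of $\widehat A$ arises as a tuple of annihilators, and $\widehat A\cong A$ ranges over all finite Abelian groups.

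The first step is to write the source identity explicitly. P\'alfy and Szab\'o state the six-cross law (following \cite{PalfySzabo1995,Penttila2017}) as a containment among terms in eight variables $x_1,\dots,x_4,y_1,\dots,y_4$, with auxiliary terms $q'_{ij}=(x_i+y_j)\cap(y_i+x_j)$ for $ij\in\{12,34,13,24,23\}$. From these they build terms $c'$, $d'$, $a'_{23}$, $a'_{234}$ of exactly the dual shapes of \eqref{eq:c-def}--\eqref{eq:a234-def}, for instance $a'_{23}=(x_2+y_2)\cap(x_3+y_3)$ and $d'=(c'+q'_{23})\cap(x_4+y_1)$. The law then reads $x_1\cap(y_1+a'_{234})\le d'+y_4$. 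The second step is a purely syntactic verification that applying ${}^\ast$ to each of these terms produces precisely $q_{ij}$, $c$, $d$, $a_{23}$ and $a_{234}$ of \eqref{eq:q-def}--\eqref{eq:a234-def}. Since ${}^\ast$ reverses $\le$, this turns the law into \eqref{eq:six-cross}, that is, $d\cap y_4\le x_1+(y_1\cap a_{234})$. I would present this as a table pairing each original term with its dual.

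The third step is to conclude by applying the correspondence to the annihilators $x_i^\perp,y_i^\perp\le\widehat A$. The six-cross law holds in $\widehat A$, which is again a finite Abelian group. Taking annihilators back to $A$, using the two identities and $U^{\perp\perp}=U$, yields \eqref{eq:six-cross} in $A$.

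I expect the main obstacle to be bookkeeping rather than mathematics. The published forms of the six-cross law differ in their indexing and in the orientation of the crosses, so one must match the five index pairs and the placement of the terms $x_4\cap y_1$ and $x_4\cap y_4$ exactly. The variable roles may need to be relabelled, for example by swapping $x\leftrightarrow y$ or permuting indices, both of which preserve the class of all eight-tuples. To guard against a mismatch, I would also sanity-check \eqref{eq:six-cross} on small examples such as $\Ftwo^2$ with lines, where the lattice is $M_3$ and the containment must hold.
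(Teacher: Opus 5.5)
Your duality reduction is sound as far as it goes. $U\mapsto U^\perp$ is an inclusion-reversing bijection from the subgroup lattice of $A$ onto that of $\widehat A\cong A$, and it exchanges $\cap$ and $+$. So a containment holds universally iff its formal dual does. Your terms $q'_{ij},c',d',a'_{23},a'_{234}$ do dualize to \eqref{eq:q-def}--\eqref{eq:a234-def}, turning $x_1\cap(y_1+a'_{234})\le d'+y_4$ into \eqref{eq:six-cross}.

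However, your second step is tautological: you obtained the ``published'' terms by dualizing the target. The whole mathematical content of the proposal is therefore the unchecked assertion that P\'alfy and Szab\'o proved exactly that inequality, up to relabelling. That is more than bookkeeping, because relabelling cannot repair every mismatch. For instance, no set of swaps $x_k\leftrightarrow y_k$ turns crosses of type $(x_i+y_j)\cap(y_i+x_j)$ into type $(x_i+x_j)\cap(y_i+y_j)$ simultaneously on the pairs $12,13,23$, since these form a triangle. The paper also calls \eqref{eq:six-cross} the order dual of the P\'alfy--Szab\'o identity, but it does not rest the proof on that identification.

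Instead, the paper verifies \eqref{eq:six-cross} directly by an element chase:
\begin{itemize}
\item Write $v\in d\cap y_4$ as $v=u+w$ with $u\in c\cap q_{23}$ and $w\in x_4\cap y_1$, and $u=r+s$ with $r\in q_{12}\cap q_{34}$, $s\in q_{13}\cap q_{24}$.
\item Decompose $r$ and $s$ each in two ways, and $u$ in one way, as $\alpha_{ij}+\beta_{ij}$ with $\alpha_{ij}\in x_i\cap y_j$ and $\beta_{ij}\in x_j\cap y_i$.
\item Then $v=x+y$ with $x=\alpha_{12}+\alpha_{13}\in x_1$ and $y=\beta_{12}+\beta_{13}+w\in y_1$.
\item Moreover $y=t_2+t_3+t_4$, where each $t_k$ has one expression visibly in $x_k$ and another visibly in $y_k$. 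The equalities between the two expressions, and the sum identity, follow from $r+s=u$ and $u+w=v$. Hence $y\in y_1\cap a_{234}$.
\end{itemize}
This argument is self-contained and insensitive to indexing and orientation conventions in the literature. It needs neither finiteness nor duality, and it shows exactly where commutativity is used.

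Your route is shorter and explains the ``order-dual'' label. But it is a complete proof only once the citation has actually been checked against the source. Replacing your second step by the direct chase would remove that dependence.
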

\begin{proof}
The result is the order dual of the P\'alfy--Szab\'o identity.  We include a
direct verification of this formulation.  Take $v\in d\cap y_4$ and
write $v=u+w$ with $u\in c\cap q_{23}$ and
$w\in x_4\cap y_1$.  Since $u\in c$, write $u=r+s$ with
$r\in q_{12}\cap q_{34}$ and
$s\in q_{13}\cap q_{24}$.  Choose decompositions
\[
 r=\alpha_{12}+\beta_{12}=\alpha_{34}+\beta_{34},
 \qquad
 s=\alpha_{13}+\beta_{13}=\alpha_{24}+\beta_{24},
\]
\[
 u=\alpha_{23}+\beta_{23},
\]
where $\alpha_{ij}\in x_i\cap y_j$ and
$\beta_{ij}\in x_j\cap y_i$.  Put
\[
 x=\alpha_{12}+\alpha_{13}\in x_1,
 \qquad
 y=\beta_{12}+\beta_{13}+w\in y_1.
\]
Then $v=x+y$.  Define
\begin{align*}
 t_2&=\beta_{12}+\alpha_{24}-\alpha_{23}
     =\beta_{23}-\alpha_{12}-\beta_{24}\in x_2\cap y_2,\\
 t_3&=\beta_{13}+\alpha_{34}-\beta_{23}
     =\alpha_{23}-\alpha_{13}-\beta_{34}\in x_3\cap y_3,\\
 t_4&=\beta_{34}+\beta_{24}+w
     =v-\alpha_{34}-\alpha_{24}\in x_4\cap y_4.
\end{align*}
The alternative expressions follow from the three decompositions of $u$.
Finally $t_2+t_3+t_4=y$, so $y\in a_{234}$ and
$v\in x_1+(y_1\cap a_{234})$.
\end{proof}

Theorem~\ref{thm:six-cross} assumes that every join is evaluated exactly.  In
an entropy inequality, however, an auxiliary random variable may only
approximate the common information corresponding to such a join.  The next
three subsections quantify and propagate this error.

\subsection{A subgroup distance}\label{sec:distance}
For subgroups $V\le U$, recall that $|U:V|=|U|/|V|$.  For arbitrary subgroups
$U,W\le A$, define
\begin{align*}
  \rho(U,W)&=\log |U:U\cap W|,\\
  \delta(U,W)&=\rho(U,W)+\rho(W,U).
\end{align*}
Thus $\rho(U,W)$ measures the logarithmic index of the part of $U$ not
contained in $W$, and $\delta$ is its symmetric version.

We shall also use the standard finite Pontryagin duality identities.  Write
$\widehat A=\operatorname{Hom}(A,\mathbb C^\times)$ and
\[
 U^\perp=\{\chi\in\widehat A:\chi(u)=1\text{ for every }u\in U\}.
\]
Then $|U^\perp|=|A|/|U|$ and
$(U\cap W)^\perp=U^\perp+W^\perp$.

\begin{lemma}[Basic properties]\label{lem:distance}
For subgroups $U,V,W,T\le A$,
\begin{align}
 \rho(U,W)&\le \rho(U,V)+\rho(V,W),\label{eq:rho-triangle}\\
 \delta(U+T,W+T)&\le \delta(U,W),\label{eq:sum-nonexp}\\
 \delta(U\cap T,W\cap T)&\le \delta(U,W).\label{eq:meet-nonexp}
\end{align}
Consequently, $\delta$ satisfies the triangle inequality.
\end{lemma}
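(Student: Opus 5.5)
The plan is to reduce all three inequalities to statements about indices and then use duality for the meet case.

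For the triangle inequality \eqref{eq:rho-triangle}, write $\rho(U,W)=\log|U:U\cap W|$. Since $U\cap V\cap W\le U\cap W$, we get
\[
\rho(U,W)\le \log|U:U\cap V\cap W| = \log|U:U\cap V| + \log|U\cap V:U\cap V\cap W|.
\]
The second term needs to be bounded by $\rho(V,W)$. The map $(U\cap V)/(U\cap V\cap W)\to V/(V\cap W)$ is injective, because its kernel is $(U\cap V)\cap(V\cap W)=U\cap V\cap W$. Hence
\[
|U\cap V:U\cap V\cap W|\le |V:V\cap W|,
\]
which gives \eqref{eq:rho-triangle}. Adding the inequality for $(U,V,W)$ to the one for $(W,V,U)$ then yields the triangle inequality for $\delta$, which is the final assertion of the lemma.

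For \eqref{eq:sum-nonexp}, it suffices to prove the one-sided bound $\rho(U+T,W+T)\le\rho(U,W)$ and then symmetrize. Consider the natural surjection $U\to (U+T)/((U+T)\cap(W+T))$; it is onto because $T$ maps to zero. Its kernel contains $U\cap W$, since $U\cap W\le W+T$. Therefore
\[
|U+T:(U+T)\cap(W+T)|\le |U:U\cap W|,
\]
as required. Taking logarithms and adding the inequality with $U$ and $W$ swapped gives \eqref{eq:sum-nonexp}.

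For \eqref{eq:meet-nonexp}, I will use Pontryagin duality as set up just before the lemma. The first step is to express $\rho$ in dual terms:
\[
\rho(U,W)=\log\frac{|U|}{|U\cap W|}=\log\frac{|(U\cap W)^\perp|}{|U^\perp|}=\log\frac{|U^\perp+W^\perp|}{|U^\perp|}.
\]
By the second isomorphism theorem this equals $\log|W^\perp:U^\perp\cap W^\perp|=\rho(W^\perp,U^\perp)$. Consequently $\delta(U,W)=\delta(U^\perp,W^\perp)$. Since $(U\cap T)^\perp=U^\perp+T^\perp$, the meet inequality becomes
\[
\delta(U^\perp+T^\perp,\,W^\perp+T^\perp)\le\delta(U^\perp,W^\perp),
\]
which is exactly \eqref{eq:sum-nonexp} applied in $\widehat A$. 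Alternatively, I could argue directly: the map $(U\cap T)/(U\cap W\cap T)\hookrightarrow U/(U\cap W)$ is injective, which gives $\rho(U\cap T,W\cap T)\le\rho(U,W)$ without duality.

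The only step that requires care is the duality identity $\delta(U,W)=\delta(U^\perp,W^\perp)$, specifically the order reversal of $\rho$. Because the direct injectivity argument is elementary, I expect no real obstacle; I would present the injection argument and mention duality only as a remark.
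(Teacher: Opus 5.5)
Your proof is correct. It differs from the paper's mainly in the meet case, and in which side of the lattice you work on.

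The paper works with sums for the first two bounds:
\begin{itemize}
\item For the triangle inequality it writes $\rho(U,W)=\log|U+W:W|$ and uses the chain $W\le V+W\le U+V+W$.
\item For the sum bound it uses the quotient map $U+W\to(U+W+T)/(W+T)$. Your surjection from $U$ is the same argument, rewritten via the second isomorphism theorem.
\item For the meet bound it relies solely on Pontryagin duality, which is what you list only as an alternative.
\end{itemize}

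Your primary arguments are different in kind. For the meet case you use the injection $(U\cap T)/(U\cap W\cap T)\hookrightarrow U/(U\cap W)$. For the triangle inequality you use the intersection-side injection $(U\cap V)/(U\cap V\cap W)\hookrightarrow V/(V\cap W)$. Neither forms a sum, and neither uses commutativity. So these two bounds hold for arbitrary subgroups of any finite group. Your sum bound likewise holds whenever $UT$ and $WT$ are subgroups, for example when $T$ is normal.

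The duality route is tied to finite Abelian groups, but it is economical. Once the order reversal $\rho(U,W)=\rho(W^\perp,U^\perp)$ is available, the meet case follows from the sum case in one line.

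Your version has a side benefit in this paper's context. It shows that the distance estimates feeding the propagation lemma do not themselves need commutativity, beyond products being subgroups. That is consistent with the paper's narrative: the Abelian hypothesis is really carried by the six-cross containment.
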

\begin{proof}
Since $\rho(U,W)=\log|U+W:W|$, the chain
$W\le V+W\le U+V+W$ gives
\begin{align*}
 |U+W:W|
 &\le |U+V+W:V+W|\,|V+W:W|\\
 &\le |U+V:V|\,|V+W:W|,
\end{align*}
which proves \eqref{eq:rho-triangle}.  For \eqref{eq:sum-nonexp}, the quotient
map $U+W\to(U+W+T)/(W+T)$ gives
$\rho(U+T,W+T)\le\rho(U,W)$, and the reverse directed inequality follows by
exchanging $U$ and $W$.  Under Pontryagin duality,
$\rho(U,W)=\rho(W^\perp,U^\perp)$; the directed distance is reversed, so the
symmetric quantity $\delta$ is preserved.  Duality also exchanges intersection
and sum, hence \eqref{eq:meet-nonexp} follows from \eqref{eq:sum-nonexp}.
Applying \eqref{eq:rho-triangle} in both directions gives the triangle
inequality for $\delta$.
\end{proof}

\subsection{The error of an inexact common information}\label{sec:join-error}
Recall that a random variable $V$ is a common information of random variables
$X$ and $Y$ when
\[
 H(V\mid X)=H(V\mid Y)=0,
 \qquad H(V)=I(X;Y).
\]
Under the first two equalities, the last one is equivalent to
$I(X;Y\mid V)=0$.  For Abelian GC random variables with kernels $K_X,K_Y$, the
common information is the coset variable with kernel $K_X+K_Y$.

Let $V$ be any auxiliary coset variable, with kernel $K_V$, used in place of
that exact sum.  Define its local join error by
\begin{equation}\label{eq:defect}
 \Delta(V;X,Y)
 :=H(V\mid X)+H(V\mid Y)+I(X;Y\mid V).
\end{equation}
Thus $\Delta(V;X,Y)=0$ exactly when $V$ is a common information of $X$ and
$Y$; in the Abelian coset model, this is equivalent to $K_V=K_X+K_Y$.

\begin{lemma}[Local join error]\label{lem:local-defect}
Put $W=K_V$, $S=K_X+K_Y$, and
$P=(K_X\cap W)+(K_Y\cap W)$.  Then
\begin{equation}\label{eq:local-identity}
\begin{aligned}
 \Delta(V;X,Y)-\delta(W,S)
  ={}&\log|K_X\cap K_Y:K_X\cap K_Y\cap W|\\
    &+2\log|W\cap S:P|.
\end{aligned}
\end{equation}
In particular,
\[
   \delta(K_V,K_X+K_Y)\le \Delta(V;X,Y).
\]
\end{lemma}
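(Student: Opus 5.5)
The plan is a direct computation in the ambient finite Abelian group $A$. Each coset variable is determined by its kernel, and every joint entropy is a logarithmic index, so both sides of \eqref{eq:local-identity} reduce to signed sums of $\log|\cdot|$ of subgroups of $A$. The only nontrivial tool is the product formula $|U+T|\,|U\cap T|=|U|\,|T|$ for subgroups $U,T\le A$.

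First I would write each term of $\Delta(V;X,Y)$ as an index, using \eqref{eq:gc-entropy}: $H(X)=\log|A:K_X|$, $H(X,V)=\log|A:K_X\cap W|$, and similarly for the other joint entropies.
\begin{itemize}
\item The first conditional entropy is $H(V\mid X)=\log|K_X:K_X\cap W|$.
\item Likewise $H(V\mid Y)=\log|K_Y:K_Y\cap W|$.
\item For the conditional mutual information, expand $I(X;Y\mid V)=H(X,V)+H(Y,V)-H(X,Y,V)-H(V)$. This gives $\log\bigl(|W|\,|K_X\cap K_Y\cap W|/(|K_X\cap W|\,|K_Y\cap W|)\bigr)$.
\item Apply the product formula to $U=K_X\cap W$ and $T=K_Y\cap W$. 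Their sum is $P$ and their intersection is $K_X\cap K_Y\cap W$, so $I(X;Y\mid V)=\log|W:P|$.
\end{itemize}

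Next I would expand $\delta(W,S)=\log|W:W\cap S|+\log|S:W\cap S|$ and subtract it from $\Delta$. Two more uses of the product formula finish the computation.
\begin{itemize}
\item For $S=K_X+K_Y$, we get $\log|K_X|+\log|K_Y|-\log|S|=\log|K_X\cap K_Y|$.
\item For $P$ (as above), the term $-\log|K_X\cap W|-\log|K_Y\cap W|$ becomes $-\log|P|-\log|K_X\cap K_Y\cap W|$.
\end{itemize}
Collecting terms, the $\log|W|$ contributions cancel, and $\log|W\cap S|$ appears twice with a plus sign. What remains is $\log|K_X\cap K_Y|-\log|K_X\cap K_Y\cap W|+2\log|W\cap S|-2\log|P|$, which is exactly \eqref{eq:local-identity}.

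For the inequality $\delta(K_V,K_X+K_Y)\le\Delta(V;X,Y)$, it suffices that both terms on the right of \eqref{eq:local-identity} are logarithms of genuine indices, hence nonnegative. The first is clear because $K_X\cap K_Y\cap W\le K_X\cap K_Y$. For the second, note that $K_X\cap W$ and $K_Y\cap W$ both lie in $W$ and in $S$. Hence $P\le W\cap S$, and $\log|W\cap S:P|\ge0$.

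I do not expect a real obstacle here. The only place needing care is the bookkeeping of the signed logarithms, and in particular recognizing $P$ as the sum of $K_X\cap W$ and $K_Y\cap W$, so that the conditional mutual information collapses to the single index $|W:P|$.
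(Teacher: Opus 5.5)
Your proposal is correct and follows essentially the same direct-substitution route as the paper, including the product formula for $S=K_X+K_Y$ and the observation that $P\le W\cap S$ for nonnegativity. The only cosmetic difference is that you obtain $I(X;Y\mid V)=\log|W:P|$ by expanding into joint entropies and applying the product formula to $K_X\cap W$ and $K_Y\cap W$. The paper instead reads the same identity off by conditioning on $V$ and translating to a coset of $W$.
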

\begin{proof}
The coset entropy formula gives
$H(V\mid X)=\log|K_X:K_X\cap W|$ and the analogous expression for $Y$.
Conditioning on $V$ leaves a uniform coset of $W$; after translation to $W$,
the two kernels are $K_X\cap W$ and $K_Y\cap W$.  Hence
$I(X;Y\mid V)=\log|W:P|$.  Substitution, together with
$|K_X+K_Y|\,|K_X\cap K_Y|=|K_X|\,|K_Y|$, gives
\eqref{eq:local-identity}.  Here $P\le W\cap S$, and also $K_X\cap K_Y\cap W\le K_X\cap K_Y$;
therefore both terms on the right are logarithms of subgroup indices and are
nonnegative.
\end{proof}

\subsection{Accumulating the local errors}\label{sec:accumulating}
Consider first the parenthesized subgroup expression
\[
  s=((U+V)\cap W)+(Y\cap Z).
\]
Its structure is represented by the expression tree in
Fig.~\ref{fig:expression-tree}.  Leaves are subgroup labels and each internal
vertex records whether its two incoming subexpressions are combined by
intersection or by sum.

\begin{figure}[!t]
\centering
\begin{tikzpicture}[>=stealth,level distance=6mm,
  level 1/.style={sibling distance=22mm},
  level 2/.style={sibling distance=14mm},
  level 3/.style={sibling distance=8mm},
  every node/.style={inner sep=1.2pt,font=\footnotesize}]
\node {$+$}
  child { node {$\cap$}
    child { node {$+$}
      child { node {$U$} }
      child { node {$V$} } }
    child { node {$W$} } }
  child { node {$\cap$}
    child { node {$Y$} }
    child { node {$Z$} } };
\end{tikzpicture}
\caption{Expression tree of $s=((U+V)\cap W)+(Y\cap Z)$.}
\label{fig:expression-tree}
\end{figure}

Suppose the inner sum $U+V$ is represented by an auxiliary coset variable $Q$.
Write $\Delta_Q$ for the local join error \eqref{eq:defect} at this $+$-vertex.
Lemma~\ref{lem:local-defect} gives
$\delta(K_Q,U+V)\le\Delta_Q$.  The left subexpression is
$t=(U+V)\cap W$; with $Q$ in place of the exact sum, its actual kernel is
$K_Q\cap W$, whereas its exact kernel is $(U+V)\cap W$.  By
\eqref{eq:meet-nonexp},
\[
 \delta\bigl(K_Q\cap W,(U+V)\cap W\bigr)\le\Delta_Q.
\]
This example contains the entire propagation idea: each approximate sum
creates one local error, and later intersections or sums cannot amplify errors
already present in its two input subexpressions.

For a general subexpression $t$, define $K_t$ by evaluating each intersection
vertex exactly and using the chosen auxiliary kernel at each sum vertex.  Define
$\widehat K_t$ recursively from the same leaves by evaluating \emph{every}
intersection as subgroup intersection and \emph{every} sum as the actual
subgroup sum.  Finally, let $J(t)$ be the set of sum vertices in the expression
tree of $t$.  In Fig.~\ref{fig:expression-tree}, the left subexpression
$(U+V)\cap W$ has one element in $J(t)$, while the full expression $s$ has two:
the inner $U+V$ vertex and the root sum vertex.

\begin{lemma}[Propagation]\label{lem:propagation}
For every subexpression $t$,
\[
  \delta(K_t,\widehat K_t)\le\sum_{v\in J(t)}\Delta_v.
\]
\end{lemma}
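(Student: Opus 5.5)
We argue by structural induction on the expression tree of $t$. Throughout, $\Delta_v$ denotes the local join error \eqref{eq:defect} at sum vertex $v$, computed for the auxiliary coset variable chosen at $v$ relative to the two coset variables whose kernels are the actual kernels $K_{t_1},K_{t_2}$ of its children. This is the situation covered by Lemma~\ref{lem:local-defect}. In the base case $t$ is a leaf. Then $K_t=\widehat K_t$ is the given subgroup, so $\delta(K_t,\widehat K_t)=0$, and the claim holds with the empty sum, since $J(t)=\varnothing$.

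For the inductive step, let the root of $t$ have children $t_1,t_2$. The sets $J(t_1)$ and $J(t_2)$ are disjoint, being vertex sets of disjoint subtrees. Hence $J(t)=J(t_1)\sqcup J(t_2)$ at an intersection root, and $J(t)=J(t_1)\sqcup J(t_2)\sqcup\{v\}$ at a sum root $v$. The one tool needed beyond the induction hypothesis is a two-variable version of the nonexpansiveness in Lemma~\ref{lem:distance}:
\[
 \delta(U_1\circ U_2,\,W_1\circ W_2)\le \delta(U_1,W_1)+\delta(U_2,W_2),\qquad \circ\in\{\cap,+\}.
\]
We obtain it by passing through the mixed term $W_1\circ U_2$. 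The triangle inequality for $\delta$ gives
\[
 \delta(U_1\circ U_2,W_1\circ W_2)\le\delta(U_1\circ U_2,W_1\circ U_2)+\delta(W_1\circ U_2,W_1\circ W_2).
\]
Each term on the right is then bounded by \eqref{eq:meet-nonexp} or \eqref{eq:sum-nonexp} with $T=U_2$ or $T=W_1$, using commutativity of $\cap$ and $+$.

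\emph{Intersection root.} Here $K_t=K_{t_1}\cap K_{t_2}$ and $\widehat K_t=\widehat K_{t_1}\cap\widehat K_{t_2}$. The two-variable bound together with the induction hypothesis gives $\delta(K_t,\widehat K_t)\le\sum_{J(t_1)}\Delta+\sum_{J(t_2)}\Delta$, which is the claim.

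\emph{Sum root $v$.} Here $K_t=K_v$ is the chosen auxiliary kernel, while $\widehat K_t=\widehat K_{t_1}+\widehat K_{t_2}$. By the triangle inequality,
\[
 \delta(K_v,\widehat K_t)\le \delta\bigl(K_v,K_{t_1}+K_{t_2}\bigr)+\delta\bigl(K_{t_1}+K_{t_2},\widehat K_{t_1}+\widehat K_{t_2}\bigr).
\]
Lemma~\ref{lem:local-defect}, applied with $K_X=K_{t_1}$ and $K_Y=K_{t_2}$, bounds the first term by $\Delta_v$. The two-variable bound with $\circ=+$ and the induction hypothesis bound the second term by the sums over $J(t_1)$ and $J(t_2)$. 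Adding the two bounds gives exactly the sum over $J(t)$.

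The main care point is bookkeeping rather than estimation. One must make sure that $\Delta_v$ is measured against the actual child kernels $K_{t_i}$, not the ideal ones $\widehat K_{t_i}$. Only with that convention does Lemma~\ref{lem:local-defect} apply verbatim at each vertex. The discrepancy between actual and ideal children is then absorbed by the triangle inequality through the mixed term $K_{t_1}+K_{t_2}$. The additivity over disjoint subtrees ensures that no $\Delta_v$ is counted twice.
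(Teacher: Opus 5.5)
Your proof is correct and follows essentially the same route as the paper: induction up the expression tree. At an intersection vertex you swap the two inputs one at a time using \eqref{eq:meet-nonexp} and the triangle inequality; at a sum vertex you first compare the auxiliary kernel with the exact sum of the current (actual) child kernels via Lemma~\ref{lem:local-defect}, and then swap the inputs via \eqref{eq:sum-nonexp}. Your explicit two-variable nonexpansiveness bound through the mixed term, and your remark that $\Delta_v$ is measured against the actual child kernels $K_{t_i}$, spell out details the paper leaves implicit.
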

\begin{proof}
Induct upward through the expression tree.  At an intersection vertex, replace
the two input kernels one at a time and use \eqref{eq:meet-nonexp}; the triangle
inequality adds their accumulated errors.  At a sum vertex, first replace the
chosen auxiliary kernel by the exact sum of its two current input kernels,
which costs at most its local error by Lemma~\ref{lem:local-defect}.  Then
replace the two inputs one at a time and use \eqref{eq:sum-nonexp}.  Summing the
errors gives exactly the right-hand side.
\end{proof}

\begin{theorem}[Lifting theorem]\label{thm:defect-lift}
Suppose a subgroup-lattice containment $r\le \ell$ holds in every finite
Abelian group.  Represent intersections by joint random variables and assign an
auxiliary random variable to every sum.  Then
\begin{equation}\label{eq:defect-lift}
 H(X_\ell\mid X_r)
 \le
 \sum_{v\in J(r)}\Delta_v+
 \sum_{v\in J(\ell)}\Delta_v.
\end{equation}
The inequality remains valid after identifying any of the formal random
variables.
\end{theorem}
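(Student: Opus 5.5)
Since every term in \eqref{eq:defect-lift} is a homogeneous linear combination of joint entropies, Lemma~\ref{lem:closure} reduces the claim to Abelian GC tuples. The identification clause is handled the same way: identifying formal variables only restricts which tuples are plugged in, so an inequality valid for all assignments stays valid for the restricted ones. We therefore fix a finite Abelian group $A$, kernels for the leaf variables, and an auxiliary kernel $K_v$ for every sum vertex $v$ of $r$ and of $\ell$. With these fixed, each $\Delta_v$ is given by Lemma~\ref{lem:local-defect}, using the kernels of the two actual input subexpressions at $v$.

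The plan is to compare actual kernels with exact kernels. Intersections are represented by joint variables, so the kernel of $X_t$ is exactly $K_t$ as defined before Lemma~\ref{lem:propagation}: intersections are evaluated exactly, and auxiliary kernels are used at sums. The coset entropy formula then gives
\[
 H(X_\ell\mid X_r)=\log|K_r:K_r\cap K_\ell|=\rho(K_r,K_\ell).
\]
The exact kernels satisfy $\widehat K_r\le\widehat K_\ell$, because $\widehat K_r,\widehat K_\ell$ are the lattice expressions evaluated in $A$ and the containment $r\le\ell$ holds in every finite Abelian group. Hence $\rho(\widehat K_r,\widehat K_\ell)=0$. Applying the triangle inequality \eqref{eq:rho-triangle} twice gives
\[
 \rho(K_r,K_\ell)\le\rho(K_r,\widehat K_r)+\rho(\widehat K_r,\widehat K_\ell)+\rho(\widehat K_\ell,K_\ell).
\]
Since $\rho\le\delta$, the outer terms are at most $\delta(K_r,\widehat K_r)$ and $\delta(\widehat K_\ell,K_\ell)$. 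Lemma~\ref{lem:propagation} bounds these by $\sum_{v\in J(r)}\Delta_v$ and $\sum_{v\in J(\ell)}\Delta_v$, which is \eqref{eq:defect-lift}.

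The main point to check is that Lemma~\ref{lem:propagation} applies verbatim. Its local errors must be measured against the actual input kernels $K_{t_1},K_{t_2}$, which is exactly how $\Delta_v$ is defined when the formal random variables represent those subexpressions. If $r$ and $\ell$ share subexpressions, we do not merge their sum vertices. Each tree gets its own error sum, and a shared auxiliary variable simply contributes its $\Delta_v$ once in each sum, exactly as the right-hand side of \eqref{eq:defect-lift} prescribes.
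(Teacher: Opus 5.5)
Your proof is correct and follows the paper's argument: $H(X_\ell\mid X_r)=\rho(K_r,K_\ell)$, the exact containment $\widehat K_r\le\widehat K_\ell$ removes the middle term of the triangle inequality, Lemma~\ref{lem:propagation} bounds $\delta(K_r,\widehat K_r)$ and $\delta(K_\ell,\widehat K_\ell)$, and identification is treated as a special choice of the auxiliary variables. One small correction: your opening appeal to Lemma~\ref{lem:closure} is unnecessary and points the wrong way, since that lemma passes from Abelian tuples to the closed cone rather than reducing anything to Abelian tuples; the theorem is already a statement about Abelian coset variables.
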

\begin{proof}
For the exact evaluations, $\widehat K_r\le\widehat K_\ell$.  Hence
\[
 \rho(K_r,K_\ell)
 \le \delta(K_r,\widehat K_r)+\delta(K_\ell,\widehat K_\ell).
\]
The left side is $H(X_\ell\mid X_r)$, and
Lemma~\ref{lem:propagation} bounds the two terms on the right.  Finally,
identifying formal variables is simply a particular choice of the auxiliary
variables, while the argument above holds for arbitrary choices.
\end{proof}

\subsection{The sixteen-variable Abelian inequality}\label{sec:sixteen-abelian}
The lifted inequality uses eight base random variables
$X_{x_1},\ldots,X_{x_4},X_{y_1},\ldots,X_{y_4}$ and eight auxiliary random
variables $X_{q_{12}},X_{q_{34}},X_{q_{13}},X_{q_{24}},X_{q_{23}},X_c,X_d,
X_{a_{23}}$.  Thus the entropy vector has sixteen coordinates.  The two
remaining joins in the formal six-cross expression reuse $X_{a_{23}}$ and
$X_{x_1}$ rather than introducing two additional coordinates.

Let
\begin{multline*}
 \mathcal Z=\{x_1,x_2,x_3,x_4,y_1,y_2,y_3,y_4,\\
 q_{12},q_{34},q_{13},q_{24},q_{23},c,d,a_{23}\}.
\end{multline*}
Let $X=(X_z)_{z\in\mathcal Z}$ be an Abelian GC tuple and write $K_z$ for the
kernel of $X_z$.  For the rest of this subsection, we identify each random
variable $X_z$ with its index $z$.  The five variables $q_{ij}$ and the variables
$c,d,a_{23}$ are assigned to the corresponding formal joins in
\eqref{eq:q-def}--\eqref{eq:a23-def}; their kernels need not equal the exact
joins.  At the next formal join,
\[
  K_{a_{23}}+(K_{x_4}\cap K_{y_4}),
\]
we use $a_{23}$ again as the auxiliary.  At the final join,
\[
  K_{x_1}+(K_{y_1}\cap K_{a_{23}}),
\]
we use $x_1$ again.  The two corresponding local errors reduce to
\[
 H(a_{23}\mid x_4,y_4),
 \qquad
 H(x_1\mid y_1,a_{23}).
\]
For random variables $V,P,Q$, retain the notation
\[
 \Delta(V;P,Q)=H(V\mid P)+H(V\mid Q)+I(P;Q\mid V).
\]
Juxtaposition such as $x_i y_j$ denotes the corresponding joint random
variable.  Put $\mathcal I=\{12,34,13,24,23\}$.

\begin{theorem}[Sixteen-variable Abelian entropy inequality]\label{thm:abelian-ineq}
Every Abelian GC tuple $X=(X_z)_{z\in\mathcal Z}$ satisfies
\begin{align}\label{eq:compact-ineq}
 H(x_1\mid d,y_4)
 \le{}&
 \sum_{ij\in\mathcal I}
 \Delta(q_{ij};x_i y_j,y_i x_j) \\
 &+\Delta(c;q_{12}q_{34},q_{13}q_{24})\nonumber\\
 &+\Delta(d;cq_{23},x_4y_1)\nonumber\\
 &+\Delta(a_{23};x_2y_2,x_3y_3)\nonumber\\
 &+H(a_{23}\mid x_4,y_4)\nonumber\\
 &+H(x_1\mid y_1,a_{23}).\nonumber
\end{align}
Its expansion is a homogeneous linear entropy inequality with $54$ nonzero
integer coefficients, listed in Appendix~\ref{app:coefficients}.
\end{theorem}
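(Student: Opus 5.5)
The plan is to obtain the inequality as a direct instance of the Lifting theorem (Theorem~\ref{thm:defect-lift}), applied to the containment $r\le\ell$ of Theorem~\ref{thm:six-cross}. Take $r=d\cap y_4$ and $\ell=x_1+(y_1\cap a_{234})$, where $d$, $a_{234}$, and the $q_{ij}$, $c$, $a_{23}$ inside them are the formal expressions \eqref{eq:q-def}--\eqref{eq:a234-def}. Theorem~\ref{thm:six-cross} shows that this containment holds in every finite Abelian group, so the hypothesis of the Lifting theorem is satisfied.

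\textbf{Step 1: assign auxiliaries to the sum vertices.} Intersections are represented by joint variables. The sum vertices are treated as follows.
\begin{itemize}
\item The five vertices defining $q_{ij}$ and the vertices defining $c$, $d$ and $a_{23}$ receive the auxiliaries $X_{q_{ij}},X_c,X_d,X_{a_{23}}$.
\item The vertex $a_{234}=a_{23}+(x_4\cap y_4)$ receives $X_{a_{23}}$ again.
\item The root of $\ell$ receives $X_{x_1}$ itself.
\end{itemize}
The final sentence of Theorem~\ref{thm:defect-lift} allows such identifications. With these choices, $X_r$ is the joint variable of $d$ and $y_4$, and $X_\ell=x_1$. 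The left side of \eqref{eq:defect-lift} is therefore exactly $H(x_1\mid d,y_4)$.

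\textbf{Step 2: match $J(r)\cup J(\ell)$ with the terms on the right.} The key point is that the expression trees share subexpressions but are counted separately. The tree of $r$ contains the $d$-vertex, the $c$-vertex, the five $q$-vertices and nothing else. The $\Delta$ arguments are the kernels of the actual inputs: for $q_{ij}$ these are $x_i\cap y_j$ and $y_i\cap x_j$, realized by the joint variables $x_iy_j$ and $y_ix_j$. For $c$ they are $q_{12}q_{34}$ and $q_{13}q_{24}$, and for $d$ they are $cq_{23}$ and $x_4y_1$. These give the first three lines of \eqref{eq:compact-ineq}. The tree of $\ell$ contains three sum vertices: the $a_{23}$-vertex, giving $\Delta(a_{23};x_2y_2,x_3y_3)$, then the $a_{234}$-vertex, and then the root.

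\textbf{Step 3: simplify the two reused-auxiliary errors.} This is where some care is needed. At the $a_{234}$-vertex the local error is $\Delta(a_{23};a_{23},x_4y_4)$. In this expression $H(a_{23}\mid a_{23})=0$ and $I(a_{23};x_4y_4\mid a_{23})=0$, so only $H(a_{23}\mid x_4,y_4)$ remains. At the root the local error is $\Delta(x_1;x_1,y_1a_{23})$, which reduces the same way to $H(x_1\mid y_1,a_{23})$.

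One point must be checked: the propagation lemma uses $K_t$ computed with auxiliary kernels. At the $a_{234}$-vertex the current input kernel is $K_{a_{23}}$, the actual kernel chosen at the previous vertex, so the error is computed with respect to the variables actually present, as required. Summing the terms gives \eqref{eq:compact-ineq}.

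\textbf{Step 4: the coefficient count.} The claim of $54$ nonzero integer coefficients follows by expanding each $\Delta$ and conditional entropy into joint entropies and collecting like terms. This is routine bookkeeping, checked against Appendix~\ref{app:coefficients}. The only nontrivial step in the whole argument is Step~2: confirming that the tree of $r$ does not include the $a$-vertices, and that reusing variables does not create hidden extra sum vertices.
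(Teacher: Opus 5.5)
Your proposal is correct and is essentially the paper's proof: apply Theorem~\ref{thm:defect-lift} to the order-dual six-cross containment \eqref{eq:six-cross} with the stated assignment of the sixteen variables. The seven sum vertices of $r$ and the three of $\ell$ then give the ten error terms, and the two reused auxiliaries collapse to $H(a_{23}\mid x_4,y_4)$ and $H(x_1\mid y_1,a_{23})$. Your bookkeeping step also checks out: the eight $\Delta$ terms contribute $48$ distinct scopes, and the two reused-auxiliary terms together with the left side contribute six more, giving the $54$ coefficients with the stated histogram.
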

\begin{proof}
Apply Theorem~\ref{thm:defect-lift} to the P\'alfy--Szab\'o containment
\eqref{eq:six-cross}, assigning the sixteen random variables as described
above.  The two reused auxiliaries give the last two conditional-entropy terms.
\end{proof}

\begin{corollary}\label{cor:abelian-cone}
Inequality \eqref{eq:compact-ineq} is valid on $\Abl_{16}$ and hence on
$\MixL_{16}$.
\end{corollary}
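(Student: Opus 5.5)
The corollary follows by combining Theorem~\ref{thm:abelian-ineq} with Lemma~\ref{lem:closure}. First, I will rewrite \eqref{eq:compact-ineq} as $\Lambda(h)\ge0$, where
\[
\Lambda(h)=\mathrm{RHS}(h)-H(x_1\mid d,y_4).
\]
Every term is a conditional entropy or a conditional mutual information among joint variables indexed by subsets of $\mathcal Z$. Each such term is a fixed integer combination of joint entropies $h(A)$ with $\varnothing\ne A\subseteq\mathcal Z$, for example
\[
H(V\mid P)=h(VP)-h(P)
\]
and
\[
I(P;Q\mid V)=h(PV)+h(QV)-h(PQV)-h(V).
\]
Hence $\Lambda$ is a homogeneous linear functional on $\mathbb R^{2^{16}-1}$ with no constant term, which is the expansion with $54$ integer coefficients recorded in Theorem~\ref{thm:abelian-ineq}. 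One point needs care: a term such as $\Delta(q_{12};x_1y_2,y_1x_2)$ involves joint variables like $x_1y_2$. Since a joint variable of a GC tuple is again a coset variable, with kernel equal to the intersection of the kernels, the entropy of any union of such groupings is just $h$ evaluated on the union of the index sets. So the functional is well defined on the sixteen-coordinate entropy vector.

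Next, Theorem~\ref{thm:abelian-ineq} states exactly that $\Lambda(h_X)\ge0$ for every Abelian GC tuple $X=(X_z)_{z\in\mathcal Z}$. This already covers degenerate choices, such as repeated or trivial subgroups, because the lifting theorem, Theorem~\ref{thm:defect-lift}, allows arbitrary auxiliary kernels and arbitrary identifications of the formal variables. Lemma~\ref{lem:closure} then applies directly. The set $\{h:\Lambda(h)\ge0\}$ is a closed convex cone, so it contains the closed conic hull $\Abl_{16}$, and by \eqref{eq:basic-inclusions} it also contains $\MixL_{16}\subseteq\Abl_{16}$.

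There is no substantive obstacle. The only step needing attention is confirming that the inequality is homogeneous, i.e. it has no constant term, since that is what keeps it valid under positive scaling and hence on the conic hull. This holds because every term is a difference of joint entropies and the coefficients of the empty set cancel, using $h(\varnothing)=0$. Closedness is handled because $\Lambda$ is linear, and therefore continuous.
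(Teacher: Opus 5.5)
Your proposal is correct and follows the paper's route. You expand each conditional entropy and conditional mutual information in \eqref{eq:compact-ineq} into joint entropies to get a homogeneous linear functional, use Theorem~\ref{thm:abelian-ineq} for nonnegativity on every Abelian tuple, and conclude with Lemma~\ref{lem:closure} and \eqref{eq:basic-inclusions}.
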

\begin{proof}
Each conditional entropy and conditional mutual information in
\eqref{eq:compact-ineq} is a linear combination of joint entropies, so the
expanded form is homogeneous and linear.  Apply Lemma~\ref{lem:closure}.
\end{proof}

\section{A homomorphic vector violating the inequality}\label{sec:witness}
\subsection{The group}
Let $V=\Ftwo^8$.  Let $C$ be the $36$-dimensional $\Ftwo$-space with basis
\[
  s_1,\ldots,s_8,\qquad c_{ij}\quad(1\le i<j\le8).
\]
Define the bilinear map $f:V\times V\to C$ by
\begin{equation}\label{eq:cocycle}
 f(u,v)=\sum_{i=1}^8u_iv_i s_i
       +\sum_{1\le i<j\le8}u_jv_i c_{ij},
\end{equation}
and put
\begin{equation}\label{eq:group-law}
 (u,a)(v,b)=(u+v,a+b+f(u,v)).
\end{equation}
This gives a class-two group $\widehat G=V\times C$.  If
$g_i=(e_i,0)$, then
\[
  g_i^2=(0,s_i),\qquad [g_i,g_j]=(0,c_{ij}).
\]
Finally set
\begin{equation}\label{eq:group-quotient}
  G=\widehat G/\langle g_6^2\rangle.
\end{equation}
Thus $|G|=2^{43}$.

Here and below, if $S\subseteq G$, then $\langle S\rangle$ denotes the normal
closure of $S$ in $G$; angle brackets in a vector
space retain their usual meaning of linear span.

\subsection{The sixteen normal subgroups}\label{sec:sixteen-subgroups}
The sixteen subgroup labels used in the entropy vector are
\[
 x_1,x_2,x_3,x_4,y_1,y_2,y_3,y_4,
 q_{12},q_{34},q_{13},q_{24},q_{23},c,d,a_{23}.
\]
We now realize the sixteen labels of Section~\ref{sec:sixteen-abelian} by
normal subgroups of $G$.  The construction is organized around the same three
features as the inequality: the ordinary joins should be exact, the two reused
joins should have zero error, and the endpoint containment should fail.  Normal
subgroups handle the first requirement automatically.  The relation graph
below is arranged so that shared edge differences create the cross
intersections needed for the other two requirements.

If an edge carries the two labels $x_i,y_j$, its edge difference lies in both
preliminary subgroups $B_{x_i}$ and $B_{y_j}$, hence in
$B_{x_i}\cap B_{y_j}$.  These intersections are the building blocks of the
$q_{ij}$ terms in \eqref{eq:q-def}.

Choose the following nine elements of $G$:
\begin{equation}\label{eq:vertices}
\begin{aligned}
 v_0&=1,&v_1&=g_2g_4g_6,&v_2&=g_2g_3,&v_3&=g_6,\\
 v_4&=g_5,&v_5&=g_4,&v_6&=g_3,&v_7&=g_2,&v_8&=g_1.
\end{aligned}
\end{equation}
For an edge joining vertices $s$ and $t$, write
\[
   [st]:=v_s^{-1}v_t.
\]
If an edge carries the label $z$, the element $[st]$ is included among the
normal generators of $B_z$.  A shared edge therefore contributes the same
group element to two preliminary subgroups.  For example, the two edges
labelled $x_1$ are $04$ and $36$, so
\[
   B_{x_1}=\langle[04],[36]\rangle,
\]
and similarly,
\[
   B_{x_2}=\langle[43],[37],[08]\rangle.
\]
Figure~\ref{fig:relation-graph} displays the incidences, and
Table~\ref{tab:preliminary-subgroups} gives all eight definitions explicitly.

\begin{figure}[t]
\centering
\begin{tikzpicture}[
  scale=.93,
  vertex/.style={circle,draw,thick,fill=white,minimum size=6mm,inner sep=0pt,font=\small},
  elab/.style={font=\scriptsize,fill=white,inner sep=1.4pt,text=black},
  every path/.style={draw=black!65,line width=.55pt}
]
\node[vertex] (v8) at (0,3.4) {8};
\node[vertex] (v4) at (-2.6,1.8) {4};
\node[vertex] (v6) at (2.6,1.8) {6};
\node[vertex] (v0) at (-4.1,0) {0};
\node[vertex] (v3) at (0,0) {3};
\node[vertex] (v2) at (4.1,0) {2};
\node[vertex] (v5) at (-2.6,-1.8) {5};
\node[vertex] (v7) at (2.6,-1.8) {7};
\node[vertex] (v1) at (0,-3.4) {1};

\draw (v0)--node[elab,sloped,above] {$x_1,y_2$} (v4);
\draw (v4)--node[elab,sloped,above] {$x_2,y_1$} (v3);
\draw (v3)--node[elab,sloped,above] {$x_1,y_3$} (v6);
\draw (v6)--node[elab,sloped,above] {$x_3,y_1$} (v2);
\draw (v0)--node[elab,sloped,above,pos=.54] {$x_2,y_3$} (v8);
\draw (v8)--node[elab,sloped,above,pos=.54] {$x_3,y_2$} (v2);
\draw (v0)--node[elab,sloped,below] {$x_3,y_4$} (v5);
\draw (v5)--node[elab,sloped,below] {$x_4,y_3$} (v3);
\draw (v3)--node[elab,sloped,below] {$x_2,y_4$} (v7);
\draw (v7)--node[elab,sloped,below] {$x_4,y_2$} (v2);
\draw (v2)--node[elab,sloped,below,pos=.55] {$x_4,y_1$} (v1);
\draw (v0)--node[elab,sloped,below,pos=.55] {$y_4$} (v1);
\end{tikzpicture}
\caption{The labelled relation graph defining the eight preliminary normal
subgroups.  If the edge $st$ carries the label $z$, then $[st]=v_s^{-1}v_t$
is one of the normal generators of $B_z$.  Shared edges contribute to both
labels.}
\label{fig:relation-graph}
\end{figure}

\begin{table*}[!t]
\centering
\caption{The eight preliminary normal subgroups read from
Fig.~\ref{fig:relation-graph}.  The third column expands each edge difference
using \eqref{eq:vertices}.}
\label{tab:preliminary-subgroups}
\small
\setlength{\tabcolsep}{4pt}
\begin{tabular}{c l l l}
\toprule
$z$ & edge differences & corresponding group elements & definition of $B_z$\\
\midrule
$x_1$ & $[04],[36]$ & $g_5,\ g_6^{-1}g_3$ & $\langle[04],[36]\rangle$\\
$x_2$ & $[43],[37],[08]$ & $g_5^{-1}g_6,\ g_6^{-1}g_2,\ g_1$ & $\langle[43],[37],[08]\rangle$\\
$x_3$ & $[05],[62],[82]$ & $g_4,\ g_3^{-1}g_2g_3,\ g_1^{-1}g_2g_3$ & $\langle[05],[62],[82]\rangle$\\
$x_4$ & $[53],[72],[21]$ & $g_4^{-1}g_6,\ g_3,\ g_3^{-1}g_4g_6$ & $\langle[53],[72],[21]\rangle$\\
$y_1$ & $[43],[62],[21]$ & $g_5^{-1}g_6,\ g_3^{-1}g_2g_3,\ g_3^{-1}g_4g_6$ & $\langle[43],[62],[21]\rangle$\\
$y_2$ & $[04],[72],[82]$ & $g_5,\ g_3,\ g_1^{-1}g_2g_3$ & $\langle[04],[72],[82]\rangle$\\
$y_3$ & $[53],[36],[08]$ & $g_4^{-1}g_6,\ g_6^{-1}g_3,\ g_1$ & $\langle[53],[36],[08]\rangle$\\
$y_4$ & $[05],[37],[01]$ & $g_4,\ g_6^{-1}g_2,\ g_2g_4g_6$ & $\langle[05],[37],[01]\rangle$\\
\bottomrule
\end{tabular}
\end{table*}

We now pass from the eight preliminary subgroups to the sixteen subgroups
used in the entropy vector.  First realize $a_{23}$ exactly as in
\eqref{eq:a23-def}:
\begin{equation}\label{eq:a23-witness}
 N_{a_{23}}=(B_{x_2}\cap B_{y_2})(B_{x_3}\cap B_{y_3}).
\end{equation}
The crucial first containment is
\begin{equation}\label{eq:collision}
 B_{x_4}\cap B_{y_4}\le N_{a_{23}}.
\end{equation}
It says precisely that this same subgroup $N_{a_{23}}$ can also serve for the
formal join $a_{23}+(x_4\cap y_4)$ from Section~\ref{sec:lift}.

Next set
\begin{equation}\label{eq:x1-absorb}
 N_{x_1}=B_{x_1}(B_{y_1}\cap N_{a_{23}}),
\end{equation}
and put $N_z=B_z$ for
$z\in\{x_2,x_3,x_4,y_1,y_2,y_3,y_4\}$.  By construction,
$B_{y_1}\cap N_{a_{23}}\le N_{x_1}$, so $N_{x_1}$ can serve for the final
right-hand join in the same way.

For the remaining seven labels $q_{ij},c,d$, use exactly the formulas
\eqref{eq:q-def}--\eqref{eq:d-def}, now reading $\cap$ as subgroup intersection
and $+$ as subgroup product.  For example,
\[
 N_{q_{12}}=(N_{x_1}\cap N_{y_2})(N_{y_1}\cap N_{x_2}).
\]
Together with the eight base subgroups and $N_{a_{23}}$, these are the sixteen
normal subgroups of the witness.  Normality is immediate: each $B_z$ is a
normal closure, and intersections and products of normal subgroups are normal.
For normal subgroups the product is their subgroup-lattice join, so all the
ordinary joins of Section~\ref{sec:sixteen-abelian} are realized exactly.

The only finite facts about these subgroups that are not immediate from their
definitions are summarized next.  Appendix~\ref{app:finite-certificate} gives
a direct binary-linear reduction of these facts from the displayed group law
and subgroup generators.

\begin{proposition}[Finite certificate]\label{prop:certificate}
Let
\[
 R=N_d\cap N_{y_4},\qquad L=N_{x_1},\qquad e=g_2g_4g_6.
\]
Then the containment \eqref{eq:collision} holds,
$e\in R\setminus L$, and
\[
   |R:R\cap L|=2.
\]
\end{proposition}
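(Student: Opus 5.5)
Since $G$ has nilpotency class two, the plan is to reduce every statement to binary linear algebra. The commutator subgroup and the squares lie in the image $C'$ of $C$ in $G$, so $C'\cong\Ftwo^{35}$ (after killing $s_6$) is central and $G/C'\cong V=\Ftwo^8$. A normal subgroup $N$ is then determined by two pieces of data: its image $\bar N\le V$ and its central part $N\cap C'$. For a normal closure $\langle S\rangle$, the image is the span of the images of $S$. The central part is the span of the squares of elements of $\bar N$, the commutators $[u,w]$ with $u\in\bar N$ and $w\in V$, and any central generators. Because $[g,h]$ is bilinear and $g^2$ is quadratic with polarization equal to the commutator, this central part is spanned by $f(u,u)$ for $u$ in a basis of $\bar N$ together with $c$-vectors $f(u,w)+f(w,u)$ for $u\in\bar N$ and $w\in V$.

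The first step is to compute each $B_z$ in this form from Table~\ref{tab:preliminary-subgroups}. Since every generator has nonzero image in $V$, we get $\bar B_z$ as the span of the listed images and $B_z\cap C'=D_z$, an explicit subspace of $C'$. Most of the $\bar B_z$ are $3$-dimensional, so each $D_z$ contains all commutators with those $3$ directions; these are large subspaces. Intersections and products must then be handled with care, because the pair (image, central part) of an intersection is not the componentwise intersection. $N\cap M$ can contain elements $(u,a)$ with $u\in\bar N\cap\bar M$ only when suitable lifts agree modulo $D_N+D_M$. I would therefore fix a section. Each element of $N$ is $(u,\sigma_N(u)+d)$ with $d\in D_N$, where $\sigma_N$ is affine-linear modulo $D_N$; the quadratic correction can be absorbed because $N$ is a subgroup. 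The intersection is then computed by solving $\sigma_N(u)-\sigma_M(u)\in D_N+D_M$, which is a linear condition. Products are easier: the image is $\bar N+\bar M$ and the central part is $D_N+D_M$ plus the lifted discrepancy terms.

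The second step propagates this data through \eqref{eq:a23-witness}, \eqref{eq:x1-absorb}, \eqref{eq:q-def}, \eqref{eq:c-def} and \eqref{eq:d-def}. For \eqref{eq:collision} I would check two things. First, $\bar B_{x_4}\cap\bar B_{y_4}\subseteq \bar N_{a_{23}}$, which I expect is forced by the graph, since the $x_4$ and $y_4$ edges close cycles through the $2$–$2$ and $3$–$3$ cross edges. Second, the central parts and lifts are compatible. Membership $e\in R$ should follow from the relation graph. The edge $01$ gives $e\in B_{y_4}$, and $e=[01]$ factors as a product of edge differences along a path $0\to\cdots\to2\to1$. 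Grouping consecutive edges as the cross pieces $x_i\cap y_j$, this exhibits $e$ inside $(c\cap q_{23})(x_4\cap y_1)$; this mirrors the decomposition $v=u+w$ in the proof of Theorem~\ref{thm:six-cross}. For $e\notin L$: in $V$ the image of $e$ is $e_2+e_4+e_6$. If this lies in $\bar N_{x_1}$, then the obstruction must be central, namely $g_6^2$ being killed and the commutator pattern of $f$ producing a mismatch in the $s_i$ coordinates. This is exactly the six-cross failure that cannot happen in the abelian setting, and I would exhibit a linear functional on $C'$ that vanishes on $D_{x_1}$ and on the section lifts but not on the central component of any element of $e\,N_{x_1}\cap\ldots$.

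The final step proves $|R:R\cap L|=2$. We have $R/(R\cap L)\cong RL/L$, and this is nontrivial by $e$. For the upper bound, I would compute $\bar R$ and $D_R$ and show that $R\le L\langle e\rangle$, i.e., $\bar R\subseteq \bar L+\langle e_2+e_4+e_6\rangle$ with compatible central parts, so the index is at most $2$. The main obstacle is the honest bookkeeping of intersections in a non-abelian group, where lifts interact with $D$-subspaces. It is a finite $\Ftwo$-linear computation in dimension $43$, and I would organize it (as Appendix~\ref{app:finite-certificate} suggests) as explicit matrix rank checks rather than by hand.
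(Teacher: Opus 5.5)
There is a genuine gap. Your framework matches Appendix~\ref{app:finite-certificate}: projection $\pi$ to $V$, central parts $C_M=M\cap C'$, intersections by comparing lifts modulo $C_M+C_N$, the relation graph for $e\in R$, and a dual functional for $e\notin L$. But the proposition \emph{is} a finite certificate, and you never produce one. The containment \eqref{eq:collision}, the exclusion $e\notin L$, and the index bound are all left as ``I expect'' or ``I would check''.

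Where you do say what to check, the plan is partly misdirected.
\begin{itemize}
\item For \eqref{eq:collision}, the projection test is vacuous. Since $\pi(B_{x_4})=\langle e_3,e_4+e_6\rangle$ and $\pi(B_{y_4})=\langle e_4,e_2+e_6\rangle$ meet trivially, $B_{x_4}\cap B_{y_4}=C_{B_{x_4}}\cap C_{B_{y_4}}=\langle c_{46},c_{23}+c_{36},c_{34},c_{24}+c_{26},s_4\rangle$ is central. The whole claim is that these five vectors lie in $C_{N_{a_{23}}}$, and that must be exhibited.
\item For the index, $\pi(e)=(e_2+e_3+e_4)+(e_3+e_6)\in\pi(L)$, so your condition $\bar R\subseteq\bar L+\langle\pi(e)\rangle$ carries no information. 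Everything sits in the unspecified ``compatible central parts''. What you actually need is $\pi(R)\subseteq\pi(N_d)\cap\pi(N_{y_4})=\langle\pi(e)\rangle$, which requires computing $\pi(N_d)=\langle e_2+e_4+e_6,e_2+e_3\rangle$, together with $C_R\le C_L$. Then $R=C_R\sqcup eC_R$, and $R\cap L=C_R$ once $e\notin L$.
\item Your normal-closure recipe is incomplete. The relevant condition is linear independence of the generator projections, not their nonvanishing, and it fails for $B_{x_4}$ and $B_{y_4}$. The resulting dependency residuals, e.g.\ $[72][53][21]^{-1}=s_3+s_4$, must be added. Here they happen to lie in the square/commutator span, but in general they need not: the normal closure of $\{g_1g_2g_3,\,g_3g_2g_1\}$ contains $c_{12}+c_{13}+c_{23}$, which is not in that span.
\end{itemize}

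The exclusion $e\notin L$ is where an idea is actually missing. Your functional must vanish on all of $C_L$. Since $L=N_{x_1}=B_{x_1}(B_{y_1}\cap N_{a_{23}})$, vanishing on $C_{B_{x_1}}$ (your $D_{x_1}$) is not enough. You also need $C_J$ for $J=B_{y_1}\cap N_{a_{23}}$, which is exactly the intersection bookkeeping you flag as the main obstacle.

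The paper avoids computing $C_J$ by certifying that a functional vanishes on $U\cap W$ by writing it as a sum of one vanishing on $U$ and one vanishing on $W$, which is possible since $(U\cap W)^\perp=U^\perp+W^\perp$. The relevant projections meet trivially, so $C_L=C_{B_{x_1}}+C_J$ and $C_{N_{a_{23}}}=C_{I_{22}}+C_{I_{33}}$. One then writes $\lambda=\mu+\nu$, where $\mu$ kills $C_{B_{y_1}}$ and $\nu=\alpha_{22}+\beta_{22}=\alpha_{33}+\beta_{33}$. Each summand, and $\lambda$ on $C_{B_{x_1}}$, kills the central part of a single preliminary subgroup. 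These checks involve only squares, commutators, and residuals of at most three generators. Finally, $t=(g_2g_3g_4)(g_3g_6)\in L$ has $\pi(t)=\pi(e)$ and $e^{-1}t=s_3+c_{34}$, while $\lambda(s_3+c_{34})=1$ and $\lambda(C_L)=0$.

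Your membership argument for $e\in R$ is essentially the paper's, but a single path does not suffice. You need:
\begin{itemize}
\item $g_6=[03]$ via $0\!-\!4\!-\!3$ and $0\!-\!5\!-\!3$, placing it in $q_{12}\cap q_{34}$;
\item $g_6^{-1}g_2g_3=[32]$ via $3\!-\!6\!-\!2$ and $3\!-\!7\!-\!2$, placing it in $q_{13}\cap q_{24}$;
\item $g_2g_3=[02]$ via $0\!-\!8\!-\!2$, placing it in $q_{23}$.
\end{itemize}
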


\begin{proof}
Appendix~\ref{app:finite-certificate} gives a binary-coordinate reduction rule
for normal closures, products, and intersections, and then supplies direct
certificates for the containment, endpoint membership, and index-two claim.
\end{proof}

\subsection{The one-bit violation}
Let $U_G$ be uniform on $G$ and, in the order
\[
 x_1,x_2,x_3,x_4,y_1,y_2,y_3,y_4,
 q_{12},q_{34},q_{13},q_{24},q_{23},c,d,a_{23},
\]
let $X_z=U_GN_z$.  Then
\begin{equation}\label{eq:witness-entropy}
 H(X_A)=43-\log_2\left|\bigcap_{z\in A}N_z\right|.
\end{equation}

At each ordinary join just described, the named subgroup is exactly the
product of its two normal child subgroups.  Therefore every
$\Delta$ term in \eqref{eq:compact-ineq} is zero.  The containment
\eqref{eq:collision} gives
\[
 H(X_{a_{23}}\mid X_{x_4},X_{y_4})=0,
\]
and \eqref{eq:x1-absorb} gives
\[
 H(X_{x_1}\mid X_{y_1},X_{a_{23}})=0.
\]
On the other hand, Proposition~\ref{prop:certificate} gives
\[
 H(X_{x_1}\mid X_d,X_{y_4})=\log_2|R:R\cap L|=1.
\]
Thus the left side of \eqref{eq:compact-ineq} is one and its right side is
zero.  This proves Theorem~\ref{thm:main}.

\section{Conclusion}\label{sec:discussion}
We have proved
\[
   \Abl_{16}\subsetneq\Hom_{16},
\]
and hence also separated the mixed-linear and homomorphic entropy cones.
Together with the five-variable DFZ consequence recalled in the Introduction,
Theorem~\ref{thm:main} gives
\[
   6\le n_{\rm AH}\le16.
\]
The present six-cross specialization uses eight base labels and eight auxiliary
labels; the two remaining formal joins are represented by already existing
labels.  Reducing the number of variables further would require additional
identifications that preserve both the Abelian lifting argument and the
zero-error homomorphic witness.  We do not know whether this can be done.  It
also remains open whether $\MixL_n\subseteq\Abl_n$ is strict for some $n$.

The witness has a direct common-information consequence.  Intersections and
products of normal subgroups are normal, and the product of two normal
subgroups represents a common information of the corresponding coset
variables.  Thus every finite sequence of common-information extensions can be
realized within the homomorphic class.  Since the witness violates
\eqref{eq:compact-ineq}, that inequality cannot be obtained by a finite proof
consisting only of common-information extensions, Shannon inequalities, and
projection.

There is also a secret-sharing interpretation.  Appendix~\ref{app:finite-certificate}
shows that the sixteen participant kernels have trivial intersection.  Choose
a maximal subgroup $N_0$ of the finite $2$-group $G$.  Then $N_0$ is normal
and $|G:N_0|=2$.  Adjoining the coset variable $\Theta N_0$ as dealer gives a
one-bit perfect homomorphic secret-sharing scheme: for every coalition kernel
$N_A$, normality implies either $N_A\le N_0$ or $N_0N_A=G$, giving perfect
reconstruction or perfect privacy, respectively.  Its participant marginal is
the entropy vector in $\Hom_{16}\setminus\Abl_{16}$.  Since coordinate
projection maps $\Abl_{17}$ into $\Abl_{16}$, the full dealer-and-share vector
also lies outside the Abelian cone.

This does not yet give an operational advantage of homomorphic schemes over
mixed-linear schemes for a fixed access structure.  Jafari and Khazaei showed
that mixed-linear schemes can outperform single-field linear schemes, and in
the ideal case every ideal homomorphic access structure has an ideal
multi-linear realization in their terminology \cite{JafariKhazaei2021}.  It
remains open whether every homomorphic scheme can be replaced, for the same
access structure, by a mixed-linear scheme with no larger contribution vector.
The analogous question for fixed network-coding instances is also open.

\appendices
\section{A Human-Readable Finite Certificate}\label{app:finite-certificate}
This appendix closes the finite group calculation in ordinary binary-linear
algebra.  No claim in the paper depends on the accompanying software.

After quotienting by $g_6^2$, write the central coordinate space as
\begin{align*}
 C'&=\langle s_1,s_2,s_3,s_4,s_5,s_7,s_8,
          c_{12},c_{13},\ldots,c_{78}\rangle_{\Ftwo},\\
 \dim C'&=35.
\end{align*}
Every element of $G$ has a unique coordinate $(u,w)\in V\times C'$.  Let
$\pi(u,w)=u$.  For a subgroup $M\le G$, put $C_M=M\cap C'$.  Every nonempty
fiber of $\pi|_M$ is a coset of $C_M$, and hence
\begin{equation}\label{eq:order-from-projection}
  \log_2|M|=\dim\pi(M)+\dim C_M.
\end{equation}

We first record the exact reduction rule used below.  Suppose
$M=\langle x_1,\ldots,x_m\rangle$ is a normal closure.  Perform Gaussian
elimination on the vectors $\pi(x_i)\in V$, carrying the corresponding group
elements through the same row operations.  The nonzero pivot rows give lifts
$r_1,\ldots,r_t\in M$ of a basis of $\pi(M)$.  Whenever a row reduces to zero
in $V$, the corresponding group product is central; record this central
\emph{dependency residual}.  Since $G$ has class two,
\begin{equation}\label{eq:normal-closure-recipe}
\begin{split}
 C_M=\big\langle{}&\text{dependency residuals},\ r_i^2,\ [r_i,g_k]:\\
 &1\le i\le t,\ 1\le k\le8\big\rangle_{\Ftwo}.
\end{split}
\end{equation}
Thus both $\pi(M)$ and $C_M$ are obtained by ordinary row reduction over
$\Ftwo$.  A product of normal subgroups is handled by applying the same rule to
the union of their generators.  For an intersection $M\cap N$, choose one lift
$m(u)\in M$ and $n(u)\in N$ for each
$u\in\pi(M)\cap\pi(N)$.  Then
\begin{equation}\label{eq:intersection-recipe}
 C_{M\cap N}=C_M\cap C_N,
\end{equation}
and $u$ occurs in $\pi(M\cap N)$ exactly when
$m(u)n(u)^{-1}\in C_M+C_N$.  Equations
\eqref{eq:normal-closure-recipe}--\eqref{eq:intersection-recipe}, together with
Table~\ref{tab:preliminary-subgroups}, are a complete finite recipe for every
subgroup calculation below.

\subsection*{A. The reused join $a_{23}+(x_4\cap y_4)$}
From Table~\ref{tab:preliminary-subgroups},
\[
 \pi(B_{x_4})=\langle e_3,e_4+e_6\rangle,
 \qquad
 \pi(B_{y_4})=\langle e_4,e_2+e_6\rangle.
\]
These two subspaces meet trivially, so $B_{x_4}\cap B_{y_4}$ is central.
Applying \eqref{eq:normal-closure-recipe} to the three generators of each
preliminary subgroup gives the following row-echelon bases of their central
parts:
\[
\begin{array}{ll}
C_{B_{x_4}}:&
 c_{48}+c_{68},\ c_{47}+c_{67},\ c_{45}+c_{56},\ c_{46},\ c_{38},\\
&c_{37},\ c_{36},\ c_{35},\ c_{34},\ c_{24}+c_{26},\ c_{23},\\
&c_{14}+c_{16},\ c_{13},\ s_4,\ s_3,\\[1mm]
C_{B_{y_4}}:&
 c_{28}+c_{68},\ c_{27}+c_{67},\ c_{25}+c_{56},\ c_{48},\ c_{47},\\
&c_{46},\ c_{45},\ c_{23}+c_{36},\ c_{34},\ c_{26},\ c_{24},\\
&c_{12}+c_{16},\ c_{14},\ s_4,\ s_2.
\end{array}
\]
Intersecting these two explicit subspaces gives
\begin{equation}\label{eq:collision-basis}
 B_{x_4}\cap B_{y_4}
 =\langle c_{46},\ c_{23}+c_{36},\ c_{34},\ c_{24}+c_{26},\ s_4\rangle.
\end{equation}
On the other hand, the reduction for $N_{a_{23}}$ contains the central
vectors
\begin{align*}
 &c_{46},\quad c_{16}+c_{36},\quad c_{16}+c_{23},\quad c_{34},\\
 &c_{16}+c_{26},\quad c_{16}+c_{24},\quad s_4.
\end{align*}
Adding the second and third vectors, and the fifth and sixth vectors, produces
the five generators in \eqref{eq:collision-basis}.  Hence
$B_{x_4}\cap B_{y_4}\le N_{a_{23}}$, proving
\eqref{eq:collision}.

\subsection*{B. Direct membership of the endpoint element}
The relation graph gives $e\in R$ without row reduction.  Indeed,
$[08]=g_1\in B_{x_2}\cap B_{y_3}$ and
$[82]=g_1^{-1}g_2g_3\in B_{x_3}\cap B_{y_2}$, so
$g_2g_3\in N_{q_{23}}$.  Also
\[
 [04]=g_5,\quad [43]=g_5^{-1}g_6
 \quad\Longrightarrow\quad g_6\in N_{q_{12}},
\]
and
\[
 [05]=g_4,\quad [53]=g_4^{-1}g_6
 \quad\Longrightarrow\quad g_6\in N_{q_{34}}.
\]
Similarly,
\[
 [36]=g_6^{-1}g_3,\quad [62]=g_3^{-1}g_2g_3
 \quad\Longrightarrow\quad g_6^{-1}g_2g_3\in N_{q_{13}},
\]
and
\[
 [37]=g_6^{-1}g_2,\quad [72]=g_3
 \quad\Longrightarrow\quad g_6^{-1}g_2g_3\in N_{q_{24}}.
\]
Therefore $g_2g_3=g_6(g_6^{-1}g_2g_3)$ lies in $N_c$ and in
$N_{q_{23}}$.  Finally,
$[21]=g_3^{-1}g_4g_6\in N_{x_4}\cap N_{y_1}$, whence
\[
 (g_2g_3)(g_3^{-1}g_4g_6)=g_2g_4g_6=e\in N_d.
\]
Since $e=[01]\in B_{y_4}=N_{y_4}$, we obtain $e\in R$.

\subsection*{C. The endpoint index and exclusion from $L$}
The projection part is small.  The reduction rules give
\[
 \pi(N_d)=\langle e_2+e_4+e_6,\ e_2+e_3\rangle,
 \qquad
 \pi(N_{y_4})=\langle e_2+e_6,\ e_4\rangle.
\]
Their intersection is
$\langle e_2+e_4+e_6\rangle=\langle\pi(e)\rangle$.  Since $e\in R$,
\begin{equation}\label{eq:Rprojection}
 \pi(R)=\langle\pi(e)\rangle.
\end{equation}
The same explicit row reduction gives the following basis of $C_R$:
\begin{align}
C_R=\langle{}&c_{28}+c_{48}+c_{68},\ c_{27}+c_{47}+c_{67},\ c_{25}+c_{56},
\nonumber\\
&s_2+s_4+c_{46},\ c_{45},\ s_2+s_4+c_{23}+c_{36},
\nonumber\\
&s_2+s_4+c_{34},\ s_2+s_4+c_{26},\ s_2+s_4+c_{24},
\nonumber\\
&s_2+s_4+c_{12}+c_{16},\ s_2+s_4+c_{14}\rangle.\label{eq:CRbasis}
\end{align}
For $L=N_{x_1}$ the projection reduction gives
\begin{equation}\label{eq:Lprojection}
 \pi(L)=\langle e_3+e_6,\ e_5,\ e_2+e_3+e_4\rangle,
\end{equation}
with respective lifts $g_3g_6$, $g_5$, and $g_2g_3g_4$ in $L$.  A row
reduction of the central part gives
\begin{align*}
C_L=\langle{}&c_{38}+c_{68},\ c_{37}+c_{67},\ c_{58},\ c_{57},\ c_{56},\\
&c_{28}+c_{38}+c_{48},\ c_{27}+c_{37}+c_{47},\ s_2+s_4+c_{46},\\
&c_{45},\ c_{36},\ c_{35},\ s_2+s_4+c_{34},\ s_2+s_4+c_{26},\ c_{25},\\
&s_2+s_4+c_{24},\ s_2+s_4+c_{23},\ s_2+s_4+c_{16},\ c_{15},\\
&s_2+s_4+c_{14},\ s_2+s_4+c_{13},\ c_{12},\ s_5,\ s_3\rangle.
\end{align*}
The basis \eqref{eq:CRbasis} lies in this span; for example,
$c_{28}+c_{48}+c_{68}$ is the sum of
$c_{28}+c_{38}+c_{48}$ and $c_{38}+c_{68}$.  Thus
\begin{equation}\label{eq:CRinL}
 C_R\le C_L.
\end{equation}

For completeness, we give a dual certificate for the load-bearing exclusion
$e\notin L$ that does not rely on the completeness of the displayed basis of
$C_L$.  Define
\begin{align*}
\lambda={}&s_2^*+c_{13}^*+c_{14}^*+c_{16}^*+c_{23}^*+c_{24}^*
          +c_{26}^*+c_{34}^*+c_{46}^*,\\
\mu={}&s_1^*+s_3^*+c_{13}^*+c_{14}^*+c_{34}^*+c_{35}^*
        +c_{36}^*+c_{45}^*+c_{46}^*,\\
\nu={}&s_1^*+s_2^*+s_3^*+c_{16}^*+c_{23}^*+c_{24}^*+c_{26}^*
        +c_{35}^*+c_{36}^*+c_{45}^*.
\end{align*}
Then $\lambda=\mu+\nu$.  In addition, over $\Ftwo$,
\begin{align*}
\nu={}&\alpha_{22}+\beta_{22},\\
\alpha_{22}={}&s_3^*+c_{23}^*+c_{24}^*+c_{35}^*+c_{36}^*+c_{45}^*+c_{46}^*,\\
\beta_{22}={}&s_1^*+s_2^*+c_{16}^*+c_{26}^*+c_{46}^*,\\[1mm]
\nu={}&\alpha_{33}+\beta_{33},\\
\alpha_{33}={}&s_1^*+s_3^*+c_{16}^*+c_{36}^*+c_{56}^*,\\
\beta_{33}={}&s_2^*+c_{23}^*+c_{24}^*+c_{26}^*+c_{35}^*+c_{45}^*+c_{56}^*.
\end{align*}
The normal-closure rule \eqref{eq:normal-closure-recipe}, applied directly to
Table~\ref{tab:preliminary-subgroups}, gives
\begin{align*}
&\lambda(C_{B_{x_1}})=0,\qquad \mu(C_{B_{y_1}})=0,\\
&\alpha_{22}(C_{B_{x_2}})=\beta_{22}(C_{B_{y_2}})=0,\\
&\alpha_{33}(C_{B_{x_3}})=\beta_{33}(C_{B_{y_3}})=0.
\end{align*}
Each statement is checked only on the squares, commutators, and dependency
residuals of at most three edge generators.

Put
$I_{22}=B_{x_2}\cap B_{y_2}$,
$I_{33}=B_{x_3}\cap B_{y_3}$, and
$J=B_{y_1}\cap N_{a_{23}}$.  The projection spaces of $I_{22}$ and $I_{33}$
are respectively
$\langle e_1+e_2+e_5\rangle$ and
$\langle e_1+e_3+e_4\rangle$, so they meet trivially.  Hence
$C_{N_{a_{23}}}=C_{I_{22}}+C_{I_{33}}$.  The two decompositions of $\nu$
above imply that $\nu$ vanishes on both $C_{I_{22}}$ and $C_{I_{33}}$, and
therefore on $C_{N_{a_{23}}}$.  Since $C_J=C_{B_{y_1}}\cap C_{N_{a_{23}}}$,
$\lambda=\mu+\nu$ vanishes on $C_J$.
Moreover,
\[
 \pi(B_{x_1})=\langle e_3+e_6,e_5\rangle,
 \qquad
 \pi(J)=\langle e_2+e_3+e_4+e_5\rangle
\]
meet trivially.  Thus
$C_L=C_{B_{x_1}}+C_J$, and $\lambda$ annihilates $C_L$.

Now use the two lifts in \eqref{eq:Lprojection} and put
\[
 t=(g_2g_3g_4)(g_3g_6)\in L.
\]
It has the same projection as $e=g_2g_4g_6$, and the group law gives
\[
 e^{-1}t=s_3+c_{34}\in C'.
\]
Since $\lambda(s_3+c_{34})=1$ whereas $\lambda(C_L)=0$, we have $e\notin L$.
Together with \eqref{eq:Rprojection} and \eqref{eq:CRinL}, every element of
$R$ is either in $C_R$ or in $eC_R$, and the second coset is disjoint from
$L$.  Therefore
\[
 R\cap L=C_R,
 \qquad
 |R:R\cap L|=2.
\]
Finally, the participant intersection used in the secret-sharing remark is
also a short finite check.  Successive intersections of the four unchanged
base subgroups $N_{y_i}=B_{y_i}$ give central parts
\begin{align*}
C_{N_{y_1}\cap N_{y_2}\cap N_{y_3}}=\langle{}&
 c_{34}+c_{35}+c_{56},\ c_{35}+c_{45},\\
&c_{34}+c_{36},\ c_{15}+c_{16}+c_{23}+c_{26},\\
&c_{13}+c_{14}+c_{15}+c_{23}+c_{24},\\
&c_{12},\ s_6\rangle.
\end{align*}
and
\[
 N_{y_1}\cap N_{y_2}\cap N_{y_3}\cap N_{y_4}=\langle s_6\rangle
\]
in $\widehat G$.  Since $G=\widehat G/\langle s_6\rangle$, the four
subgroups $N_{y_1},\ldots,N_{y_4}$ already have trivial intersection in $G$.
Consequently,
\[
  \bigcap_{z\in\mathcal Z}N_z=1.
\]

This completes the proof of Proposition~\ref{prop:certificate} and the finite
claim used in Section~\ref{sec:discussion}.

\section{Expanded Entropy Functional}\label{app:coefficients}
Write the expanded form of \eqref{eq:compact-ineq} as
\[
  \sum_{\varnothing\ne S\subseteq\mathcal Z}\kappa_S H(X_S)\ge0,
\]
with the convention ``right-hand side minus left-hand side.''  Table~\ref{tab:coefficients}
lists the $54$ nonzero coefficients; every omitted coefficient is zero.  In
the table, $H(z_1,\ldots,z_k)$ abbreviates
$H(X_{z_1},\ldots,X_{z_k})$.  The coefficient histogram is $35$ entries equal
to $-1$, three equal to $+1$, and sixteen equal to $+2$.

\begin{table*}[!t]
\centering
\caption{The $54$ nonzero coefficients of the expanded entropy functional.}
\label{tab:coefficients}
\scriptsize
\setlength{\tabcolsep}{3.0pt}
\renewcommand{\arraystretch}{1.03}
\begin{tabular}{@{}r l@{\quad}r l@{\quad}r l@{}}
\hline
$\kappa_S$ & scope & $\kappa_S$ & scope & $\kappa_S$ & scope\\
\hline
-1 & $H(a_{23})$ & -1 & $H(x_{2},y_{2})$ & +2 & $H(x_{1},y_{2},q_{12})$\\
-1 & $H(c)$ & -1 & $H(x_{2},y_{3})$ & +2 & $H(x_{1},y_{3},q_{13})$\\
-1 & $H(d)$ & -1 & $H(x_{2},y_{4})$ & -1 & $H(x_{1},y_{4},d)$\\
-1 & $H(q_{12})$ & -1 & $H(x_{3},y_{1})$ & +2 & $H(x_{2},y_{1},q_{12})$\\
-1 & $H(q_{13})$ & -1 & $H(x_{3},y_{2})$ & +2 & $H(x_{2},y_{2},a_{23})$\\
-1 & $H(q_{23})$ & -1 & $H(x_{3},y_{3})$ & +2 & $H(x_{2},y_{3},q_{23})$\\
-1 & $H(q_{24})$ & -1 & $H(x_{3},y_{4})$ & +2 & $H(x_{2},y_{4},q_{24})$\\
-1 & $H(q_{34})$ & -1 & $H(x_{4},y_{1})$ & +2 & $H(x_{3},y_{1},q_{13})$\\
-1 & $H(q_{12},q_{34})$ & -1 & $H(x_{4},y_{2})$ & +2 & $H(x_{3},y_{2},q_{23})$\\
-1 & $H(q_{13},q_{24})$ & -1 & $H(x_{4},y_{3})$ & +2 & $H(x_{3},y_{3},a_{23})$\\
-1 & $H(q_{23},c)$ & -1 & $H(x_{4},y_{4})$ & +2 & $H(x_{3},y_{4},q_{34})$\\
-1 & $H(x_{1},y_{2})$ & -1 & $H(y_{1},a_{23})$ & +2 & $H(x_{4},y_{1},d)$\\
-1 & $H(x_{1},y_{3})$ & +1 & $H(y_{4},d)$ & +2 & $H(x_{4},y_{2},q_{24})$\\
-1 & $H(x_{2},y_{1})$ & +2 & $H(q_{12},q_{34},c)$ & +2 & $H(x_{4},y_{3},q_{34})$\\
+2 & $H(q_{13},q_{24},c)$ & +2 & $H(q_{23},c,d)$ & +1 & $H(x_{4},y_{4},a_{23})$\\
+1 & $H(x_{1},y_{1},a_{23})$ & -1 & $H(q_{12},q_{34},q_{13},q_{24},c)$ & -1 & $H(x_{1},x_{2},y_{1},y_{2},q_{12})$\\
-1 & $H(x_{1},x_{3},y_{1},y_{3},q_{13})$ & -1 & $H(x_{2},x_{3},y_{2},y_{3},a_{23})$ & -1 & $H(x_{2},x_{3},y_{2},y_{3},q_{23})$\\
-1 & $H(x_{2},x_{4},y_{2},y_{4},q_{24})$ & -1 & $H(x_{3},x_{4},y_{3},y_{4},q_{34})$ & -1 & $H(x_{4},y_{1},q_{23},c,d)$\\
\hline
\end{tabular}
\end{table*}

\section*{Supplementary Information}
A self-contained exact-verification package is archived at Zenodo,
\href{https://doi.org/10.5281/zenodo.21872692}{doi:10.5281/zenodo.21872692}.
It independently reconstructs the finite group and the sixteen normal subgroups,
checks the finite identities in Appendix~\ref{app:finite-certificate}, reproduces
the coefficients of Appendix~\ref{app:coefficients}, and evaluates the witness
exactly.  No mathematical claim in the paper depends on the archive.

\section*{Acknowledgment}
OpenAI ChatGPT was used throughout the manuscript for language and structural
editing and LaTeX assistance, and was also used for literature-search assistance
and for development and debugging of the optional verification code.  The author
independently checked the mathematical statements, references, and exact
certificates and takes full responsibility for the content.

\end{document}